  \DeclareFontShape{T1}{cmr}{m}{scit}{<->ssub*cmr/m/sc}{}%
\newtheorem{theorem}{Theorem}[section]
\newtheorem{lemma}[theorem]{Lemma}
\newtheorem{corollary}[theorem]{Corollary}
\newtheorem{proposition}[theorem]{Proposition}
\newtheorem{fact}[theorem]{Fact}
\newtheorem{definition}[theorem]{Definition}
\newtheorem{remark}{Remark}[section]
\newcommand{\ketbra}[2]{\ensuremath{\ket{#1}\!\bra{#2}}}
\DeclarePairedDelimiter\rbra{\lparen}{\rparen}
\DeclarePairedDelimiter\sbra{\lbrack}{\rbrack}
\DeclarePairedDelimiter\cbra{\{}{\}}
\DeclarePairedDelimiter\abs{\lvert}{\rvert}
\DeclarePairedDelimiter\Abs{\lVert}{\rVert}
\DeclarePairedDelimiter\ceil{\lceil}{\rceil}
\DeclarePairedDelimiter\floor{\lfloor}{\rfloor}
\DeclarePairedDelimiter\ket{\lvert}{\rangle}
\DeclarePairedDelimiter\bra{\langle}{\rvert}
\newcommand{\tr} {\operatorname{tr}}
\newcommand{\Var}{\mathop{\bf Var\/}}
\newcommand{\Cov}{\mathop{\bf Cov\/}}
\newcommand{\E}{\mathop{\bf E\/}}
\newcommand{\footremember}[2]{%
    \footnote{#2}
    \newcounter{#1}
    \setcounter{#1}{\value{footnote}}%
}
\begin{document}

\title{Simultaneous Estimation of Nonlinear
Functionals \\ of a Quantum State}
\author{Kean Chen \footremember{1}{Kean Chen is with the Department of Computer and Information Science, University of Pennsylvania, United States (e-mail: \url{keanchen.gan@gmail.com}).} \and Qisheng Wang \footremember{2}{Qisheng Wang is with the School of Informatics, University of Edinburgh, Edinburgh, United Kingdom (e-mail: \url{QishengWang1994@gmail.com}).} \and Zhan Yu \footremember{3}{Zhan Yu is with the Centre for Quantum Technologies, National University of Singapore, Singapore (e-mail: \url{yu.zhan@u.nus.edu}).} \and Zhicheng Zhang \footremember{4}{Zhicheng Zhang is with the Centre for Quantum Software and Information, University of Technology Sydney, Sydney, Australia (e-mail: \url{iszczhang@gmail.com}).}}
\date{}

\maketitle

\begin{abstract}
    We consider a fundamental task in quantum information theory, estimating the values of $\operatorname{tr}(\mathcal{O}\rho)$, $\operatorname{tr}(\mathcal{O}\rho^2)$, \dots, $\operatorname{tr}(\mathcal{O}\rho^k)$ for an observable $\mathcal{O}$ and a quantum state $\rho$. 
    We show that $\widetilde\Theta(k)$ samples of $\rho$ are sufficient and necessary to simultaneously estimate all the $k$ values.
    This means that estimating all the $k$ values is almost as easy as estimating only one of them, $\operatorname{tr}(\mathcal{O}\rho^k)$.
    As an application, our approach advances the sample complexity of entanglement spectroscopy and the virtual cooling for quantum many-body systems.
    Moreover, we extend our approach to estimating general functionals by polynomial approximation.
\end{abstract}

%\textbf{Keywords: %quantum information, sample complexity, quantum property estimation, trace estimation.}

%\newpage
%\tableofcontents
\newpage

\section{Introduction}
    
    Estimating the expectation value, $\tr\rbra{\mathcal{O}\rho}$, of an observable $\mathcal{O}$ with respect to a quantum state $\rho$ is a fundamental task in quantum physics, with a wide range of applications in, e.g., computation with one clean qubit \cite{KL98,She06,CM18}, approximating the Jones polynomial \cite{AJL09,SJ08}, fidelity estimation \cite{FL11,WZC+23,GP22}, quantum state tomography \cite{HHJ+17,OW16,OW17}, shadow tomography \cite{Aar18,HKP20,BO21}, and quantum machine learning \cite{HKP21}. 
To estimate expectation values with respect to thermal many-body states at lower temperature, high-order functionals $\tr\rbra{\mathcal{O}\rho^k}$ play a key role in the virtual cooling protocol \cite{CCL+19,DTE+25}.
These types of high-order functionals were also involved in the virtual distillation protocol \cite{HMO+21,LLWF23} and localized virtual purification protocol \cite{HEY24}.

In this paper, we consider the task of estimating the values of
\begin{equation} \label{eq:k-values}
\tr\rbra{\mathcal{O}\rho},~ \tr\rbra{\mathcal{O}\rho^2},~\dots,~ \tr\rbra{\mathcal{O}\rho^k}
\end{equation}
for an observable $\mathcal{O}$ and a quantum state $\rho$. 
A direct solution to this task is to estimate each $\tr\rbra{\mathcal{O}\rho^j}$ one by one.
Specifically, each $\tr\rbra{\mathcal{O}\rho^j}$ can be estimated to within additive error $\varepsilon$ using $O\rbra{j}$ samples of $\rho$ by the generalized SWAP test (by cyclic shift) \cite{EAO+02,Bru04} (see also \cite{CCL+19,HMO+21,SLLJ24}).
By individually estimating $\tr\rbra{\mathcal{O}\rho^j}$ each with probability $1 - 1/(3k)$ using $O\rbra{j\log\rbra{k}}$ samples of $\rho$, one can estimate all the $k$ values with probability at least $2/3$ using $\sum_{j=1}^{k} O\rbra{j\log\rbra{k}} = O\rbra{k^2\log\rbra{k}}$ samples of $\rho$.

In sharp contrast, we show that all the $k$ values in \cref{eq:k-values} can be simultaneously estimated with little overhead when it is sufficient to estimate only one of them, $\tr\rbra{\mathcal{O}\rho^k}$. 
We formally state this discovery as follows. 

\begin{theorem} [Simultaneous estimator, \cref{thm-2151640,thm:upper} combined] \label{thm:main}
    For any known observable $\mathcal{O}$, we can simultaneously estimate $\tr\rbra{\mathcal{O}\rho}$, $\tr\rbra{\mathcal{O}\rho^2}$, \dots, $\tr\rbra{\mathcal{O}\rho^k}$ to within additive error $\varepsilon$ using $O\rbra{k\log\rbra{k} \Abs{\mathcal{O}}^2/\varepsilon^2}$ samples of $\rho$, where $\Abs{\cdot}$ is the operator norm. 
    
    More specifically, given $n$ samples of $\rho$, we can obtain $n$ random variables $p_1, p_2, \dots, p_n$ such that for any $1 \leq k \leq n$, 
    \[
    \E\sbra{p_k} = \tr\rbra{\mathcal{O} \rho^k}, \qquad \Var\sbra{p_k} \leq \frac{2 \Abs{\mathcal{O}}^2 k}{n}.
    \]
\end{theorem}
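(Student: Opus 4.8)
The plan is to produce all $n$ estimators from a \emph{single} measurement---weak Schur sampling---rather than from disjoint blocks of copies. The latter would force the quadratic sample cost noted in the introduction, since a destructive cyclic test consumes its $k$ copies and cannot reuse them across different powers. I would start from the permutation--trace identity: for $\sigma \in \Sym(n)$ of cycle type $\mu$, the shift operator $C_\sigma$ on $\rbra{\mathbb{C}^d}^{\otimes n}$ satisfies $\tr\rbra{C_\sigma \rho^{\otimes n}} = \prod_i \tr\rbra{\rho^{\mu_i}}$. Summing over the class $C_k$ of permutations that are a single $k$-cycle with $n-k$ fixed points then gives $\tr\rbra{Z_k \rho^{\otimes n}} = \tfrac{(n)_k}{k}\tr\rbra{\rho^k}$, where $Z_k = \sum_{\sigma \in C_k} C_\sigma$, $\abs{C_k} = (n)_k/k$, and $(n)_k = n(n-1)\cdots(n-k+1)$. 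This is the $\mathcal{O} = I$ skeleton, which I would settle first and decorate with $\mathcal{O}$ at the end.

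For unbiasedness, recall that $Z_k$ is central in $\mathbb{C}\sbra{\Sym(n)}$, so it acts as a scalar $\omega_\lambda(Z_k) = \abs{C_k}\chi^\lambda(c_k)/\dim\lambda$ on each isotypic component $\lambda \vdash n$, where $c_k$ is any representative $k$-cycle. Hence $Z_k = \sum_\lambda \omega_\lambda(Z_k) P_\lambda$ as operators, and weak Schur sampling---which is efficiently implementable and returns $\lambda$ with probability $\tr\rbra{P_\lambda \rho^{\otimes n}}$---yields
\[
\E_\lambda\sbra{\omega_\lambda(Z_k)} = \sum_\lambda \omega_\lambda(Z_k)\tr\rbra{P_\lambda \rho^{\otimes n}} = \tr\rbra{Z_k \rho^{\otimes n}}.
\]
Dividing by $\abs{C_k}$, the normalized character $p_k := \chi^\lambda(c_k)/\dim\lambda$ satisfies $\E\sbra{p_k} = \tr\rbra{\rho^k}$ exactly, for every $k$ simultaneously, from the single sampled $\lambda$.

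The variance is the crux. Since $\omega_\lambda$ is an algebra homomorphism on the center, $p_k^2 = \abs{C_k}^{-2}\omega_\lambda\rbra{Z_k^2}$; writing $Z_k^2 = \sum_\mu c^\mu_{kk} Z_\mu$ in the class-sum basis and taking expectations as above gives
\[
\E\sbra{p_k^2} = \frac{1}{\abs{C_k}^2}\sum_\mu c^\mu_{kk}\,\abs{C_\mu}\prod_i \tr\rbra{\rho^{\mu_i}}.
\]
The term $\mu = (k,k)$ (two disjoint $k$-cycles) reproduces $\tr\rbra{\rho^k}^2$ up to a factor $1 - O(k^2/n)$, so after subtracting $\E\sbra{p_k}^2$ what survives is the contribution of pairs of $k$-cycles sharing at least one point. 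The main obstacle is to show that these ``overlap'' terms sum to at most $2k/n$ rather than the $k^2/n$ that a crude count of shared points suggests; I expect this to require tracking the sign and magnitude structure of the constants $c^\mu_{kk}$ together with $\abs{\tr\rbra{\rho^m}} \le 1$, i.e.\ essentially a Schur--Weyl (Kerov-type) fluctuation bound for the $k$-cycle character.

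Finally, to pass from $\rho^k$ to $\mathcal{O}\rho^k$ I would mark one site of the $k$-cycle with $\mathcal{O}$, using $\tr\rbra{\mathcal{O}\rho^k} = \tr\sbra{\rbra{\mathcal{O}\otimes I^{\otimes(n-1)}}C_\sigma \rho^{\otimes n}}$ for $\sigma$ a $k$-cycle through site $1$. The decorated operator is no longer central, so the plan is to run the Schur-type measurement on the commutant of the algebra generated by $\Sym(n)$ together with $\mathcal{O}$ on the marked copy (a mixed Schur--Weyl measurement), producing an estimator whose per-shot magnitude is bounded by $\Abs{\mathcal{O}}$; this is where the factor $\Abs{\mathcal{O}}^2$ enters and where I anticipate the second main difficulty, namely carrying the fluctuation bound through the decorated algebra. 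Granting $\Var\sbra{p_k} \le 2\Abs{\mathcal{O}}^2 k/n$, the sample-complexity claim follows routinely: Chebyshev makes each $p_k$ accurate to $\varepsilon$ once $n \gtrsim \Abs{\mathcal{O}}^2 k/\varepsilon^2$, a median-of-means boost gives failure probability $1/(3k)$ at an $O(\log k)$ overhead, and a union bound over the $k$ values yields the stated $O\rbra{k\log(k)\Abs{\mathcal{O}}^2/\varepsilon^2}$.
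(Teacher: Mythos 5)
Your estimator in the $\mathcal{O}=I$ case is in fact \emph{identical} to the paper's: the symmetrization of the cyclic shift $s_k$ is exactly the normalized class sum $Z_k/\abs{C_k}$, so measuring it amounts to weak Schur sampling and reporting $\chi^\lambda(c_k)/\dim\lambda$. The unbiasedness argument and the final Chebyshev/median/union-bound step are correct. But the step you yourself call the crux --- the variance bound --- is left unproven, and the route you propose (expanding $Z_k^2$ in the class-sum basis and controlling the structure constants $c^\mu_{kk}$) is genuinely delicate: already the leading term $\mu=(k,k)$ contributes a deficit of order $(k^2/n)\tr\rbra{\rho^k}^2$ relative to $\E\sbra{p_k}^2$, which exceeds the target $2k/n$ unless it cancels against the overlap terms (it must, e.g.\ for pure states the variance is exactly zero), so no termwise bound will work. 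The paper sidesteps all of this with one structural observation: $\mathcal{O}_k=\Phi(T_k)$, where $T_k$ is the average of $\lfloor n/k\rfloor$ \emph{independent} block estimators each of operator norm at most $2\Abs{\mathcal{O}}$, and the Kadison--Schwarz inequality $\Phi(T_k)^2\sqsubseteq\Phi(T_k^2)$ for the positive unital map $\Phi$ gives $\Var\sbra{\mathcal{O}_k}\le\Var\sbra{T_k}\le 2k\Abs{\mathcal{O}}^2/n$ immediately. That is the missing idea, and it replaces the entire character-fluctuation analysis.

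The extension to general $\mathcal{O}$ has a second, independent gap. Once you decorate one site with $\mathcal{O}$, the operators $\mathcal{O}_1,\dots,\mathcal{O}_k$ are no longer central; permutation invariance only places them in the commutant of the $\Sym(n)$ action, which is a \emph{noncommutative} algebra, so it is not automatic that the $k$ decorated observables admit a common measurement --- nor even that each one is Hermitian. Your proposal to ``measure in the mixed Schur--Weyl basis'' does not by itself produce a single projective measurement diagonalizing all of them. The paper proves both Hermiticity and pairwise commutativity by a combinatorial argument on weighted permutations: the involution $\dag$ preserves the ``weighted cycle type'' whenever the total weight is at most $2$, which applies to $s_ke_1$ (weight $1$) and to every term of $\mathcal{O}_i\mathcal{O}_j$ (weight $2$), and combined with permutation invariance this forces $\mathcal{O}_i\mathcal{O}_j=(\mathcal{O}_i\mathcal{O}_j)^\dag=\mathcal{O}_j\mathcal{O}_i$. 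Without an argument of this kind your simultaneous measurement is not defined in the general-$\mathcal{O}$ case.
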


Our estimator is actually optimal up to a logarithmic factor, as shown by the following lower bound. 

\begin{theorem} [Lower bound on estimating a single term, \cref{thm:lower-obs} restated] \label{thm:optimal}
    For any observable $\mathcal{O}$, estimating $\tr\rbra{\mathcal{O}\rho^k}$ to within additive error $\varepsilon$ requires $\Omega\rbra{k\Abs{\mathcal{O}}^2/\varepsilon^2}$ samples of $\rho$. 
\end{theorem}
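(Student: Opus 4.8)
The plan is to prove the lower bound by Le Cam's two-point method: I exhibit two states $\rho_0,\rho_1$ whose values of the functional $\tr(\mathcal O\rho^k)$ differ by more than $2\varepsilon$, yet which are so close in fidelity that distinguishing $\rho_0^{\otimes n}$ from $\rho_1^{\otimes n}$ is impossible with $o(k\Abs{\mathcal O}^2/\varepsilon^2)$ copies. Any estimator achieving additive error $\varepsilon$ with success probability $2/3$ would, by thresholding its output at the midpoint of the two functional values, distinguish $\rho_0$ from $\rho_1$ with probability $2/3$; hence the sample complexity of estimation is at least that of distinguishing. For the distinguishing side I would invoke the Fuchs--van de Graaf inequality together with multiplicativity of fidelity: the trace distance between the $n$-copy states obeys $\Abs{\rho_0^{\otimes n}-\rho_1^{\otimes n}}_1 \le 2\sqrt{1 - F(\rho_0,\rho_1)^{2n}}$, which stays below a fixed constant (so the two hypotheses remain statistically indistinguishable) whenever $n = O(1/(1-F(\rho_0,\rho_1)))$. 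It therefore suffices to build a pair with functional gap at least $2\varepsilon$ and $1 - F(\rho_0,\rho_1) = \Theta(\varepsilon^2/(k\Abs{\mathcal O}^2))$. A virtue of this route is that it automatically rules out arbitrary collective measurements on all $n$ copies, not just single-copy strategies.

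For the construction I would work in the eigenbasis of $\mathcal O$. Let $\ket 1$ be an eigenvector attaining $|o_1| = \Abs{\mathcal O}$ and $\ket 2$ any other eigenvector, and take the commuting (diagonal) states $\rho_b = \lambda_b \ketbra{1}{1} + (1-\lambda_b)\ketbra{2}{2}$ supported on this two-dimensional subspace, with $\lambda_0 = 1 - a/k$ for a suitable constant $a$ and $\lambda_1 = \lambda_0 + \delta$. On this subspace $\tr(\mathcal O\rho_b^k) = o_1\lambda_b^k + o_2(1-\lambda_b)^k$, and since $(1-\lambda)^{k-1} = (a/k)^{k-1}$ is super-polynomially small while $\lambda^{k-1} \to e^{-a}$, the derivative in $\lambda$ is dominated by the first term, giving a gap $\tr(\mathcal O\rho_1^k) - \tr(\mathcal O\rho_0^k) = o_1 k\lambda^{k-1}\delta\,(1+o(1))$ of magnitude $\Theta(\Abs{\mathcal O}\,k\,e^{-a}\delta)$. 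Choosing $\delta = \Theta(\varepsilon/(\Abs{\mathcal O}\,k))$ makes this exactly $2\varepsilon$.

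Finally I would estimate the fidelity deficit. Because $\rho_0,\rho_1$ commute, $F(\rho_0,\rho_1)$ is the classical Bhattacharyya coefficient of two Bernoulli-type distributions, and a second-order expansion gives $1 - F(\rho_0,\rho_1) = \tfrac{\delta^2}{8\lambda(1-\lambda)}(1+o(1))$. With $\lambda(1-\lambda) = \Theta(a/k)$ and $\delta^2 = \Theta(\varepsilon^2/(\Abs{\mathcal O}^2 k^2))$ this yields $1 - F(\rho_0,\rho_1) = \Theta(\varepsilon^2/(\Abs{\mathcal O}^2 k))$, and substituting into the distinguishing bound delivers the claimed $n = \Omega(k\Abs{\mathcal O}^2/\varepsilon^2)$. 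I expect the main obstacle to be the quantitative bookkeeping rather than the conceptual structure: one must fix the constant $a$ so that $\lambda^k$ stays bounded away from $0$ and $1$ (ensuring the gap $2\varepsilon$ is attainable for $\varepsilon$ up to a constant multiple of $\Abs{\mathcal O}$), verify that $\delta$ is small enough for the first-order expansions of both the functional gap and the Hellinger/fidelity deficit to be legitimate (controlling the $O(\delta^3)$ remainders and checking $\lambda_1 \le 1$), and confirm that the neglected contributions of $o_2$ and of the $(1-\lambda)^k$ term are genuinely negligible. A careful write-up should also dispatch the edge cases, e.g.\ $\mathcal O$ proportional to the identity, which reduces to the $\tr(\rho^k)$ lower bound and is handled by the same pair.
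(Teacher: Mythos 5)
Your proposal is correct and follows essentially the same route as the paper: a two-point state-discrimination argument with a diagonal two-level hard instance $\rho_b$ whose dominant eigenvalue is $1-\Theta(1/k)$ and is perturbed by $\delta=\Theta(\varepsilon/(\Abs{\mathcal{O}}k))$, giving a functional gap $\Theta(\varepsilon)$ while the infidelity is $\Theta(\varepsilon^2/(\Abs{\mathcal{O}}^2 k))$, combined with the Helstrom/Fuchs--van de Graaf bound $\Omega(1/(1-\mathrm{F}))$ on the copy complexity of discrimination. The paper's instance is exactly your construction with $a=1$ (after rescaling $\mathcal{O}$ to unit norm and aligning $\ket{0}$ with the top eigenvector), and the remaining work in both write-ups is the same first-order bookkeeping you identify.
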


Roughly speaking, \cref{thm:main} means that we can estimate all the $k$ values in \cref{eq:k-values} using only $O\rbra{k\log\rbra{k}}$ samples of $\rho$, quadratically improving the conventional $O\rbra{k^2\log\rbra{k}}$, whereas estimating the single value $\tr\rbra{\mathcal{O}\rho^k}$ already requires $\Omega\rbra{k}$ samples of $\rho$ due to \cref{thm:optimal}. 

As an implication, we can estimate $\tr\rbra{\mathcal{O}f\rbra{\rho}}$ for any polynomial $f \in \mathbb{R}\sbra{x}$. 

\begin{corollary}[Special case of \cref{corollary:poly-func}] \label{corollary:tr-O-f-rho}
    For any known observable $\mathcal{O}$ and polynomial $f \in \mathbb{R}\sbra{x}$ of degree $k$, we can estimate $\tr\rbra{\mathcal{O}f\rbra{\rho}}$ to within additive error $\varepsilon$ using $O\rbra{k\Abs{\mathcal{O}}^2\Abs{f}_1^2/\varepsilon^2}$ samples of $\rho$, where the $\ell_1$-norm $\Abs{f}_1$ is the absolute sum of the coefficients of $f$. 
\end{corollary}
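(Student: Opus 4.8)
The plan is to reduce the estimation of $\tr\rbra{\mathcal{O}f\rbra{\rho}}$ to the simultaneous estimator of \cref{thm:main} by exploiting linearity. Writing $f\rbra{x} = \sum_{j=0}^{k} c_j x^j$, we have $\tr\rbra{\mathcal{O}f\rbra{\rho}} = \sum_{j=0}^{k} c_j \tr\rbra{\mathcal{O}\rho^j}$, where the $j=0$ term equals $c_0 \tr\rbra{\mathcal{O}}$ and is a known quantity computable classically from $\mathcal{O}$. Thus only the terms $\tr\rbra{\mathcal{O}\rho^j}$ for $1 \le j \le k$ need to be estimated from samples of $\rho$.

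First I would draw $n$ samples of $\rho$ and feed them into the estimator of \cref{thm:main}, obtaining random variables $p_1,\dots,p_n$ with $\E\sbra{p_j} = \tr\rbra{\mathcal{O}\rho^j}$ and $\Var\sbra{p_j} \le 2\Abs{\mathcal{O}}^2 j / n$ for every $1 \le j \le k \le n$. I then form the single estimator
\[
\widehat{T} \coloneqq c_0 \tr\rbra{\mathcal{O}} + \sum_{j=1}^{k} c_j p_j,
\]
which is unbiased by linearity of expectation: $\E\sbra{\widehat{T}} = \sum_{j=0}^{k} c_j \tr\rbra{\mathcal{O}\rho^j} = \tr\rbra{\mathcal{O}f\rbra{\rho}}$.

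The main point to handle carefully is the variance. Because $p_1,\dots,p_k$ are all produced from the \emph{same} batch of $n$ samples, they are in general correlated, so one cannot simply sum their variances. Instead I would invoke the triangle inequality for the $L^2$-norm of random variables (Minkowski's inequality on standard deviations): letting $\sigma\rbra{\cdot} \coloneqq \Var\sbra{\cdot}^{1/2}$,
\[
\sigma\rbra{\widehat{T}} \le \sum_{j=1}^{k} \abs{c_j}\, \sigma\rbra{p_j} \le \sum_{j=1}^{k} \abs{c_j} \sqrt{\frac{2\Abs{\mathcal{O}}^2 j}{n}} \le \sqrt{\frac{2\Abs{\mathcal{O}}^2 k}{n}} \sum_{j=1}^{k} \abs{c_j},
\]
where I used $\sqrt{j} \le \sqrt{k}$ for $j \le k$. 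Since $\sum_{j=1}^{k}\abs{c_j} \le \Abs{f}_1$, squaring gives $\Var\sbra{\widehat{T}} \le 2\Abs{\mathcal{O}}^2 k \Abs{f}_1^2 / n$. It is precisely this step---bounding the standard deviation of the correlated sum by the sum of standard deviations---that produces the $\ell_1$-norm $\Abs{f}_1$ rather than the $\ell_2$-norm one would obtain under independence.

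Finally I would choose $n = \Theta\rbra{k \Abs{\mathcal{O}}^2 \Abs{f}_1^2 / \varepsilon^2}$ so that $\Var\sbra{\widehat{T}} = O\rbra{\varepsilon^2}$, and Chebyshev's inequality then guarantees $\abs{\widehat{T} - \tr\rbra{\mathcal{O}f\rbra{\rho}}} \le \varepsilon$ with constant probability; a standard median-of-means argument boosts this to success probability $1-\delta$ at the cost of an $O\rbra{\log\rbra{1/\delta}}$ factor. In the nontrivial regime $\varepsilon \lesssim \Abs{\mathcal{O}}\Abs{f}_1$ this choice also satisfies $n \ge k$, as required to invoke \cref{thm:main}. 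I expect no substantive obstacle beyond the correlation subtlety above: the essential content is that routing all the powers through one simultaneous estimator keeps the sample cost linear in $k$, while the polynomial's coefficients enter only through $\Abs{f}_1$.
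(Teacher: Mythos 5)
Your proposal is correct and follows essentially the same route as the paper's proof of \cref{corollary:poly-func}: decompose $f$ into monomials, feed one batch of $n$ samples into the simultaneous estimator of \cref{thm:main}, form the linear combination, and bound the standard deviation of the correlated sum by the sum of standard deviations (the paper does this via Cauchy--Schwarz on the covariances, which is the same inequality you invoke), yielding $n = O\rbra{k\Abs{\mathcal{O}}^2\Abs{f}_1^2/\varepsilon^2}$. Your explicit handling of the constant term $c_0\tr\rbra{\mathcal{O}}$ and of the absolute values $\abs{c_j}$ is slightly more careful than the paper's write-up, but the argument is the same.
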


The task of estimating $\tr\rbra{\mathcal{O}f\rbra{\rho}}$ in \cref{corollary:tr-O-f-rho} is an extension of estimating $\tr\rbra{\mathcal{O}\rho^k}$ with $f\rbra{x} = x^k$. 
The sample complexity in \cref{corollary:tr-O-f-rho} is optimal up to a constant factor with the hard instance where $f\rbra{x} \propto x^k$ by \cref{thm:optimal}. 
For the special case of $\mathcal{O} = I$, \cref{corollary:tr-O-f-rho} improves the result of $O\rbra{k^2 \Abs{f}_1^2/\varepsilon^2}$ in \cite{QKW24}. 
\cref{corollary:tr-O-f-rho} can be further extended to the case with multiple polynomials, i.e., estimating $\tr\rbra{\mathcal{O}f_1\rbra{\rho}}$, $\tr\rbra{\mathcal{O}f_2\rbra{\rho}}$, \dots, $\tr\rbra{\mathcal{O}f_m\rbra{\rho}}$ given $m$ polynomials $f_1$, $f_2$, \dots, $f_m$. 
See \cref{sec:extension} for more details. 

We compare our results with previous work in \cref{tab:cmp}.

\subsection{The simultaneous estimators}\label{sec-4271228}

Now we introduce the main idea of constructing the simultaneous estimators in \cref{thm:main}. 

Suppose $\rho$ is a quantum state in a finite-dimensional Hilbert space $\mathcal{H}$. 
Let $\mathcal{L}\rbra{\mathcal{H}}$ denote the set of linear operators on $\mathcal{H}$.
Let $\mathfrak{S}_n$ be the symmetric group on $\{1,\ldots,n\}$ and for each permutation $\pi\in\mathfrak{S}_n$, let $U_\pi$ be the unitary representation of $\pi$ on $\mathcal{H}^{\otimes n}$, i.e., $U_\pi\ket{\alpha_1}\cdots\ket{\alpha_n}=\ket{\alpha_{\pi^{-1}(1)}}\cdots\ket{\alpha_{\pi^{-1}(n)}}$. Let $\Phi$ be the symmetrization map: $\Phi(M)= \frac{1}{n!}\sum_{\pi\in\mathfrak{S}_n} U_{\pi}M U_{\pi}^\dag$, where $M\in\mathcal{L}(\mathcal{H}^{\otimes n})$. Then, given an observable $\mathcal{O}\in\mathcal{L}(\mathcal{H})$, for each $1\leq k\leq n$, we define: 
\begin{equation}\label{eq-3181224}
\mathcal{O}_k\coloneqq \Phi\rbra*{U_{s_k}\cdot \rbra*{\mathcal{O}\otimes I^{\otimes n-1}}},
\end{equation}
where \(s_k\in\mathfrak{S}_n\) is the cyclic shift permutation on the first \(k\) elements, i.e., $s_k(i)=\rbra{i+1} \bmod k$ for $i\leq k$ and $s_k(i)=i$ for $k< i\leq n$. We will show that those $\mathcal{O}_k\in\mathcal{L}(\mathcal{H}^{\otimes n})$ are valid observables and can be used to estimate $\tr(\mathcal{O}\rho^k)$ efficiently.

\begin{table}[t]
\centering
\setlength{\tabcolsep}{0.5em}
\begin{tabular}{ccc}
\toprule
\multirow{2}[2]{*}{Quantities} & \multicolumn{2}{c}{Sample Complexity} \\
\cmidrule{2-3}
& Prior Works & This Work \\
\midrule
\multirow{2}{*}{$\tr\rbra{\rho^2}$} & $O\rbra{1/\varepsilon^2}$~\cite{BCWdW01} & \multirow{2}{*}{/} \\
{} &  $\Omega\rbra{1/\varepsilon^2}$~\cite{CWLY23,GHYZ24} & \\
\midrule
\multirow{3}{*}{$\tr\rbra{\rho^k},~k \geq 3$} & $O\rbra{k/\varepsilon^2}$ \cite{EAO+02} & \multirow{3}{*}{$\Omega(k/\varepsilon^2)$} \\
& $\Omega\rbra{1/\varepsilon}$~\cite{LW25} &  \\
& $\Omega\rbra{1/\varepsilon^2}$~\cite{CW25} & \\
\midrule
{$\tr\rbra*{f\rbra{\rho}}^\ddagger$} & $O\rbra{k^2 \Abs{f}_1^2/\varepsilon^2}$~\cite{QKW24} & ${\Theta(k \|f\|_1^2/\varepsilon^2})$ \\
\midrule
{$\cbra{\tr\rbra{\mathcal{O}\rho^j}}_{j=1}^k$} & $\widetilde{O}\rbra{k^2\Abs{\mathcal{O}}^2 /\varepsilon^2}$~\cite{SLLJ24} & ${\widetilde{\Theta}(k\|\mathcal{O}\|^2/\varepsilon^2)}$ \\
\bottomrule
\end{tabular}
\caption{Sample complexity of estimating nonlinear functionals of $\rho$. $^\ddagger$ $f$ is a degree-$k$ polynomial.}
\label{tab:cmp}
\end{table}

To study the estimators $\mathcal{O}_k$, it is more convenient to treat $U_{s_k}\cdot\rbra{\mathcal{O}\otimes I^{\otimes n-1}}$ in \cref{eq-3181224} as a \textit{weighted permutation} (see \cref{def:weighted-perm}). Specifically, a weighted permutation of degree $n$ is a tuple $(\pi,w)$, where $\pi\in\mathfrak{S}_n$ is a permutation and $w=[w_1,\ldots,w_n]\in\mathbb{Z}^n_{\geq 0}$ is a length-$n$ vector with non-negative integer entries representing the weights. We may directly use $\pi w$ to denote $(\pi,w)$ without causing confusion. We define the matrix representation of $\pi w$ as
\begin{equation*}
\mu(\pi w)=U_\pi\cdot (\mathcal{O}^{w_1}\otimes \mathcal{O}^{w_2}\otimes\cdots\otimes \mathcal{O}^{w_n}).
\end{equation*}
Let $\mathfrak{W}_n$ denote the set of all weighted permutations of degree $n$. Then, $\mathfrak{W}_n\cong \mathfrak{S}_n\ltimes \mathbb{Z}_{\geq 0}^n$ forms a monoid, where the multiplication in $\mathfrak{W}_n$ coincides with the matrix multiplication on its matrix representation, i.e., $\mu(\pi w \cdot \pi' w')=\mu(\pi w)\cdot \mu(\pi' w')$. Additionally, $\mathfrak{W}_n$ is also equipped with an involution ``$\dag$'' which coincides with the Hermitian transpose on its matrix representation, i.e., $\mu((\pi w)^\dag)=\mu(\pi w)^\dag$. Some examples are shown in \cref{fig-3260329}. We can identify each $\pi\in\mathfrak{S}_n$ as $\pi \mathbf{0}\in\mathfrak{M}_n$ (where $\mathbf{0}\in\mathbb{Z}_{\geq 0}^n$ is the zero vector). Then, we interpret $\Phi$ as a symmetrization map acting on the monoid ring $\mathbb{C}\mathfrak{W}_n$ (the set of finite formal sums of elements in $\mathfrak{W}_n$ with complex coefficients), i.e., $\Phi(\mathcal{X})=\frac{1}{n!}\sum_{\pi\in\mathfrak{S}_n} \pi \mathcal{X}\pi^{-1}$ for any $\mathcal{X}\in\mathbb{C}\mathfrak{W}_n$. Thus, our estimators can be written as $\mathcal{O}_k=\mu(\Phi(s_k e_1))$, where $e_1=[1,0,\ldots,0]\in\mathbb{Z}_{\geq 0}^n$. 

\begin{figure}[!t]
    \centering
    \begin{subfigure}[b]{0.6\linewidth}
    \includegraphics[width=1.0\linewidth]{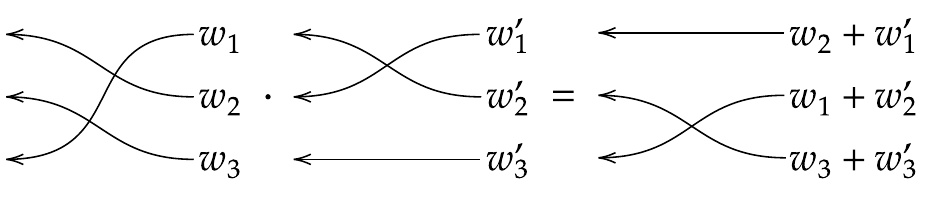}
    \caption{Multiplication.}
    \end{subfigure}
    
    \begin{subfigure}[b]{0.6\linewidth}
    \includegraphics[width=1.0\linewidth]{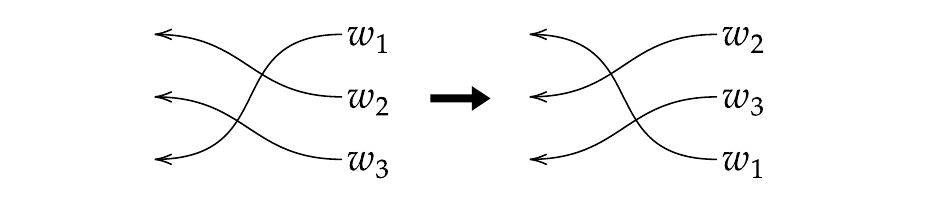}
    \caption{Involution.}\label{fig-3262240}
    \end{subfigure}
    \caption{Diagrammatic illustration of operations on the weighted permutations.}
    \label{fig-3260329}
\end{figure}

\begin{figure}[!t]
    \centering
    \includegraphics[width=0.65\linewidth]{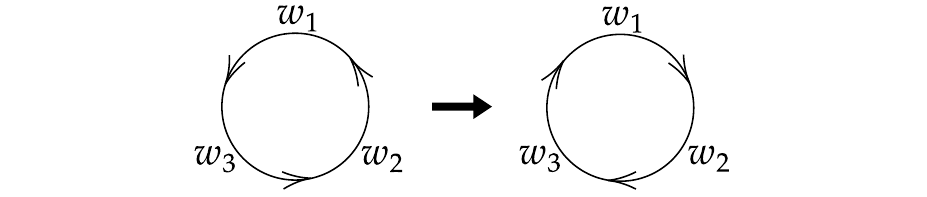}
    \caption{Weighted cycle types before/after the involution.}
    \label{fig-3260334}
\end{figure}

Then, consider the orbit of a weighted permutation $X\in\mathfrak{W}_n$ under the conjugation action of $\mathfrak{S}_n$, i.e., $\{\pi X\pi^{-1} \,|\, \pi\in\mathfrak{S}_n\}$. It turns out that each orbit corresponds to a generalized cycle type (see \cref{def:weighted-cycle-type}), referred to as a weighted cycle type, where each transition within the cycle is associated with a weight $w_i$. \cref{fig-3260334} shows the weighted cycle types corresponding to the example in \cref{fig-3262240}. We remark that the involution does not necessarily preserve the weighted cycle type in general, which behaves differently 
from the ordinary cycle types of permutations. 
Nevertheless, 
it does preserve the weighted cycle type when the total weight $|w|_1=\sum_{i}w_i\leq 2$.

\begin{proposition} [\cref{lemma-4121756}]
    For $\pi w \in \mathfrak{W}_n$, if $\abs{w}_1 \leq 2$, then $\pi w$ and $\rbra{\pi w}^\dag$ have the same weighted cycle type. 
\end{proposition}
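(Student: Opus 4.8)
The plan is to make the involution on $\mathfrak{W}_n$ completely explicit, read off its effect on weighted cycle types, and then reduce the claim to a one-line combinatorial observation about cyclic words that holds precisely because $\abs{w}_1\le 2$.

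First I would compute $(\pi w)^\dagger$ in closed form. Write $\mathcal{O}^{\otimes w}\coloneqq \mathcal{O}^{w_1}\otimes\cdots\otimes\mathcal{O}^{w_n}$, so that $\mu(\pi w)=U_\pi\,\mathcal{O}^{\otimes w}$. Since $\mathcal{O}$ is an observable, each $\mathcal{O}^{w_i}$ is Hermitian, hence $\mathcal{O}^{\otimes w}$ is Hermitian; together with $U_\pi^\dagger=U_{\pi^{-1}}$ this gives $\mu(\pi w)^\dagger=\mathcal{O}^{\otimes w}U_{\pi^{-1}}=U_{\pi^{-1}}\rbra{U_\pi\,\mathcal{O}^{\otimes w}\,U_{\pi}^{\dagger}}$. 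Using the defining action of $U_\pi$, conjugation permutes tensor factors as $U_\pi\rbra{A_1\otimes\cdots\otimes A_n}U_\pi^\dagger=A_{\pi^{-1}(1)}\otimes\cdots\otimes A_{\pi^{-1}(n)}$, so the bracketed operator equals $\mathcal{O}^{\otimes(\pi\cdot w)}$ with $(\pi\cdot w)_i=w_{\pi^{-1}(i)}$. Therefore, as elements of $\mathfrak{W}_n$ (\cref{def:weighted-perm}), $(\pi w)^\dagger=\pi^{-1}(\pi\cdot w)$: the involution inverts the permutation and advances each weight by one step along $\pi$. The same computation with $\sigma\mathbf{0}$ on both sides shows conjugation acts by $\sigma\cdot(\pi w)=(\sigma\pi\sigma^{-1})(\sigma\cdot w)$, which I would record for the next step.

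Second, I would translate this into weighted cycle types (\cref{def:weighted-cycle-type}). A cycle $c_1\to c_2\to\cdots\to c_\ell\to c_1$ of $\pi$ contributes the cyclic word $(w_{c_1},\ldots,w_{c_\ell})$, read in the orientation given by $\pi$ and taken up to rotation; the weighted cycle type is the multiset of these words over all cycles, and it is a complete invariant of the conjugation orbit. Applying the formula from the first step, $(\pi w)^\dagger$ has permutation $\pi^{-1}$, which traverses the same cycle in the reverse orientation $c_1\to c_\ell\to\cdots\to c_2\to c_1$, and its weight at $c_j$ is $(\pi\cdot w)_{c_j}=w_{c_{j-1}}$. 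Reading the weights along this reversed traversal therefore yields exactly the reversed word $(w_{c_\ell},w_{c_{\ell-1}},\ldots,w_{c_1})$. Hence the weighted cycle type of $(\pi w)^\dagger$ is obtained from that of $\pi w$ by reversing the cyclic word of every cycle, and the proposition reduces to showing that each such word equals its reversal up to rotation.

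Third, I would run the case analysis forced by $\abs{w}_1\le 2$, according to how the at most two units of weight are distributed among the cycles. If a cycle carries total weight $0$, its word is all-zero and is trivially reversal-invariant. If a cycle carries all of the weight in a single nonzero entry (value $1$ or $2$), its word has exactly one nonzero symbol, and a one-nonzero cyclic word is invariant under reversal up to rotation. The only remaining possibility is two entries equal to $1$; if they lie in different cycles, each such cycle again has a single nonzero symbol and we are done, so the sole substantive case is two $1$'s inside one cycle. There the cyclic word is determined up to rotation by the unordered pair of arc-lengths separating the two marked positions, and reversal is a reflection of the cycle, which preserves that unordered pair---hence the word again equals its reversal up to rotation. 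In every case each cycle's word is reversal-invariant, so the two weighted cycle types coincide, proving the claim. I expect the two-ones-in-one-cycle case to be the crux, the real content being that two marked points on a cycle form a reflection-symmetric configuration; this is exactly what fails once $\abs{w}_1\ge 3$ (e.g.\ the word $(1,2,0)$ is not rotation-equivalent to its reversal $(0,2,1)$), which pins the threshold at $\abs{w}_1\le 2$ and matches the remark that the involution need not preserve weighted cycle types in general.
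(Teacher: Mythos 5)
Your proof is correct and follows essentially the same route as the paper's: reduce the involution to a reversal of each directed cycle's weight word and then check reversal-invariance by the same case analysis on how the at most two units of weight are distributed (your ``unordered pair of arc-lengths'' is exactly the paper's $n_1,n_2$ argument). The only difference is that you explicitly derive the formula $(\pi w)^\dagger=(\pi^{-1},w_{\pi^{-1}})$ and its effect on cycle words, a step the paper asserts without proof, and you add a useful counterexample pinning the threshold at $\abs{w}_1\le 2$.
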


This further leads to the following two key properties of $\mathcal{O}_k$.

\begin{enumerate}
    \item Hermiticity: The estimator \(\mathcal{O}_k\) is Hermitian and thus is a valid observable. Indeed, we have $\mathcal{O}_k^\dag=\mu(\Phi((s_k e_1)^\dag))$. Since the total weight $|e_1|_1=1\leq 2$, $(s_k e_1)^\dag$ and $s_k e_1$ have the same weighted cycle type (thus the same orbit). Note that $\Phi(s_k e_1)$ produces the average of elements in the orbit of $s_k e_1$ and thus equals $\Phi((s_k e_1)^\dag)$. Therefore, $\mathcal{O}_k^\dag=\mathcal{O}_k$.
    \item Commutativity:
The observables $\{\mathcal{O}_i\}_{i=1}^n$ pairwise commute, thereby enabling simultaneous measurements. 
To prove commutativity, it suffices to show that the coefficient of each  weighted permutation in $\mathcal{O}_i\mathcal{O}_j$ coincides with that in $\mathcal{O}_j\mathcal{O}_i=(\mathcal{O}_i\mathcal{O}_j)^\dag$. Since each weighted permutation occurred in $\mathcal{O}_i\mathcal{O}_j$ has total weight $|w|_1=2$, the involution $\dag$ preserves its weighted cycle type. Therefore, for any weighted cycle type $\tau$, the sum of coefficients of all weighted permutations of type $\tau$ in $\mathcal{O}_i\mathcal{O}_j$ is the same as that in $(\mathcal{O}_i\mathcal{O}_j)^\dag$.
On the other hand, since all $\mathcal{O}_i$ are permutation-invariant (i.e., $U_\pi \mathcal{O}_i U_\pi^\dag=\mathcal{O}_i$ for any $\pi\in\mathfrak{S}_n$), $\mathcal{O}_i\mathcal{O}_j$ is permutation-invariant. Thus all weighted permutations of type $\tau$ must share the same coefficient in $\mathcal{O}_i\mathcal{O}_j$. The same argument also works for $(\mathcal{O}_i\mathcal{O}_j)^\dag$. Then, we conclude that the coefficient of each weighted permutation in $\mathcal{O}_i\mathcal{O}_j$ is exactly the same as that in $(\mathcal{O}_i\mathcal{O}_j)^\dag$.
More details can be found in \cref{lemma-2110242}.
\end{enumerate}

Given the Hermiticity and commutativity of the estimators, our next step is to study the expectation and variance of each estimator. %the estimator $\mathcal{O}_k$. 
First, it is easy to see that $\mathcal{O}_k$ is an unbiased estimator for $\tr(\mathcal{O}\rho^{k})$, i.e., $\E_{\rho^{\otimes n}}[\mathcal{O}_k]=\tr(\mathcal{O}_k\rho^{\otimes n})=\tr(U_{s_k}(\mathcal{O}\otimes I^{\otimes n-1})\rho^{\otimes n})=\tr(\mathcal{O}\rho^k)$. 
Then, to bound the variance, we treat our estimator $\mathcal{O}_k$ as a symmetrization of the natural estimator $T_k$ for the quantity $\tr(\mathcal{O}\rho^k)$, which is
\begin{align*}
    T_k=\frac{1}{2\floor{n/k}}\sum_{i=0}^{\floor{n/k}-1} \Big(\mu(s_{(ik+1,\ldots,ik+k)}e_{ik+1}) +\mu(s_{(ik+1,\ldots,ik+k)}e_{ik+1})^\dag\Big),
\end{align*}
where $s_J$ is the cyclic shift permutation on the sequence $J$ and $e_{i}=[0,\ldots,1,\ldots,0]\in\mathbb{Z}_{\geq 0}^n$ is the vector with a $1$ in the $i$-th place and $0$'s elsewhere. 
By the Kadison-Schwarz inequality~\cite{kadison1952generalized} (cf.\ \cite{BOW19}), we can show that the variance of the symmetrized estimator $\mathcal{O}_k$ is no more than that of $T_k$, i.e.,
\begin{align}
\Var[\mathcal{O}_k]\leq \Var[T_k] \leq \frac{2k\|\mathcal{O}\|^2}{n}. \nonumber
\end{align}
Then, by Chebyshev's inequality and the median trick, it suffices to use $O(k\log(k)\|\mathcal{O}\|^2/\varepsilon^2)$ samples of $\rho$ to give an estimate of $\tr(\mathcal{O}\rho^k)$ to within additive error $\varepsilon$ with probability at least $1-1/(3k)$. Similarly, these samples can be reused (due to the commutativity of $\{\mathcal{O}_i\}_{i=1}^k$) to obtain an estimate of each of $\tr(\mathcal{O}\rho^{k-1}),\ldots,\tr(\mathcal{O}\rho)$ with probability $1-1/(3k)$. 
Therefore, the success probability of the whole process is at least $(1-1/(3k))^k\geq 2/3$.

\subsection{Lower bounds} 
To prove the matching lower bound in \cref{thm:optimal}, we first reduce the problem to the case of $\Abs*{\mathcal{O}}=1$ and $\bra{0}\mathcal{O}\ket{0}=1$ by rescaling.
In this case, we can obtain a lower bound $\Omega\rbra*{k/\varepsilon^2}$ by reducing from the lower bound for quantum state discrimination.
Then, we find a new hard instance not well-known in the literature. Specifically, the pair of quantum states $\rho_+$ and $\rho_-$ for the discrimination task is constructed as:
\[
\rho_\pm = \rbra*{1-\frac{1}{k} \pm \frac{\varepsilon}{k}} \ketbra{0}{0} + \rbra*{\frac{1}{k}\mp\frac{\varepsilon}{k}} \ketbra{1}{1}, 
\]
where $\varepsilon \in \rbra{0, 1}$. 
For sufficiently small $\varepsilon > 0$, it can be verified that $\tr\rbra{\mathcal{O}\rho_+^k} - \tr\rbra{\mathcal{O}\rho_-^k} \geq \Omega\rbra{\varepsilon}$.
Therefore, any estimator for $\tr\rbra{\mathcal{O}\rho^k}$ to within additive error $\Theta\rbra{\varepsilon}$ can be used to distinguish $\rho_+$ and $\rho_-$. 
On the other hand, the infidelity between $\rho_+$ and $\rho_-$ is bounded by $\gamma = 1 - \mathrm{F}\rbra{\rho_+, \rho_-} \leq O\rbra{\varepsilon^2/k}$. 
By the Helstrom-Holevo bound \cite{Hel67,Hol73} (cf.\ \cite{Wil13,Hay16}), the sample complexity of distinguishing $\rho_+$ and $\rho_-$ is lower bounded by $\Omega\rbra{1/\gamma} = \Omega\rbra{k/\varepsilon^2}$, which easily leads to \cref{thm:optimal} by rescaling.

Note that another consequence of \cref{thm:optimal} is the optimality of the generalized SWAP test \cite{EAO+02,Bru04} (see also \cite{CCL+19,HMO+21,SLLJ24}) for estimating a single term $\tr\rbra{\mathcal{O}\rho^k}$. In comparison,
our \cref{thm:main} implies that simultaneously estimating all the $k$ values in \cref{eq:k-values} only incurs an $O\rbra*{\log(k)}$ factor.

\subsection{Organization of this paper}

The construction of the simultaneous estimators in \cref{thm:main} will be presented in \cref{sec:simul}, and their sample complexity will be analyzed in \cref{sec:upper}. 
The sample complexity lower bounds will be given in \cref{sec:lower}. 
Applications of our simultaneous estimators will be discussed in \cref{sec:app}. 
Finally, a brief discussion will be given in \cref{sec:discus} with several open questions. 

\section{Simultaneous Estimators} \label{sec:simul}

In this section, we prove the following result.
\begin{theorem}\label{thm-2151640}
Suppose \(\mathcal{O}\in\mathcal{L}(\mathcal{H})\) is an observable. Given \(n\) samples of an unknown state \(\rho\), there is an algorithm that outputs a list $p_1, p_2, \ldots, p_n $ such that for any \(1\leq k\leq n\), we have
\[\E[p_k]=\tr(\mathcal{O} \rho^k) \quad\quad\textup{and}\quad\quad \Var[p_k]\leq \frac{2k\|\mathcal{O}\|^2}{n},\]
where \(\|\mathcal{O}\|\) is the operator norm of \(\mathcal{O}\).
\end{theorem}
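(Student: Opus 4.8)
The plan is to realize the output list $p_1,\dots,p_n$ as the outcomes of a single projective measurement of the pairwise-commuting observables $\{\mathcal{O}_k\}_{k=1}^n$ from \cref{eq-3181224}. First I would record the algorithm precisely: since each $\mathcal{O}_k$ is Hermitian and the family commutes (the two properties sketched in \cref{sec-4271228} and established in \cref{lemma-2110242}), the $\mathcal{O}_k$ share a common eigenbasis, so one measurement of $\rho^{\otimes n}$ in that basis returns a tuple of common eigenvalues, and I set $p_k$ to be the $\mathcal{O}_k$-eigenvalue of the observed eigenspace. By construction each $p_k$ is then distributed exactly as the measurement of $\mathcal{O}_k$ on $\rho^{\otimes n}$, giving $\E[p_k]=\tr(\mathcal{O}_k\rho^{\otimes n})$ and $\Var[p_k]=\tr(\mathcal{O}_k^2\rho^{\otimes n})-\tr(\mathcal{O}_k\rho^{\otimes n})^2$. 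It then remains to compute this mean and to bound this variance.

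The mean is the easy step. Because $\rho^{\otimes n}$ is permutation invariant, it is fixed by the conjugation-averaging map $\Phi$, which is also self-adjoint as a superoperator; hence $\tr(\mathcal{O}_k\rho^{\otimes n})=\tr(\Phi(\mu(s_k e_1))\rho^{\otimes n})=\tr(\mu(s_k e_1)\rho^{\otimes n})=\tr(U_{s_k}(\mathcal{O}\otimes I^{\otimes n-1})\rho^{\otimes n})$. Unfolding the cyclic shift $s_k$ against the product state collapses the trace to $\tr(\mathcal{O}\rho^k)$, so $\E[p_k]=\tr(\mathcal{O}\rho^k)$ for every $k$.

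The heart of the argument is the variance bound, and I expect the real work to sit here. The strategy is to write $\mathcal{O}_k$ as the symmetrization of the block estimator $T_k$ and then proceed in two stages. Stage one is the exact identity $\mathcal{O}_k=\Phi(T_k)$: every summand $\mu(s_{(ik+1,\dots,ik+k)}e_{ik+1})$ of $T_k$ is a single $k$-cycle carrying one unit of weight on a disjoint block, so it lies in the conjugation orbit of $s_k e_1$ and symmetrizes to $\mu(\Phi(s_k e_1))=\mathcal{O}_k$; the Hermitian partner $\mu(\cdot)^\dag$ symmetrizes to the same thing because $\abs{e_1}_1=1\le 2$, so by \cref{lemma-4121756} the involution preserves the weighted cycle type and hence the orbit. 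Averaging the $\floor{n/k}$ identical contributions with the normalization in the definition of $T_k$ returns exactly $\mathcal{O}_k$; getting this orbit count and normalization right is the fiddliest point. Stage two applies the Kadison--Schwarz inequality to the unital completely positive map $\Phi$: for the Hermitian $T_k$ one has $\Phi(T_k)^2\le\Phi(T_k^2)$, and since $\rho^{\otimes n}$ is $\Phi$-invariant and $\Phi$ is self-adjoint, $\tr(\mathcal{O}_k^2\rho^{\otimes n})=\tr(\Phi(T_k)^2\rho^{\otimes n})\le\tr(\Phi(T_k^2)\rho^{\otimes n})=\tr(T_k^2\rho^{\otimes n})$, while the means agree; this yields $\Var[\mathcal{O}_k]\le\Var[T_k]$.

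Finally I would bound $\Var[T_k]\le 2k\|\mathcal{O}\|^2/n$ directly. Writing $T_k=\frac1m\sum_{i=0}^{m-1}M_i$ with $m=\floor{n/k}$ and $M_i=\frac12(\mu(s_{(ik+1,\dots,ik+k)}e_{ik+1})+\mu(\cdot)^\dag)$ the Hermitized block estimator, each $M_i$ has operator norm at most $\|\mathcal{O}\|$ (a unitary shift times a single factor of $\mathcal{O}$ tensored with identities), so its measurement variance on $\rho^{\otimes n}$ is at most $\|M_i\|^2\le\|\mathcal{O}\|^2$. The decisive observation is that distinct $M_i,M_j$ act on disjoint tensor blocks of the product state, so $\tr(M_iM_j\rho^{\otimes n})=\tr(M_i\rho^{\otimes n})\tr(M_j\rho^{\otimes n})$ and all cross-covariances vanish; hence $\Var[T_k]=\frac1{m^2}\sum_i\Var[M_i]\le\|\mathcal{O}\|^2/m\le 2k\|\mathcal{O}\|^2/n$, using $m\ge n/(2k)$. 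Chaining $\Var[p_k]=\Var[\mathcal{O}_k]\le\Var[T_k]\le 2k\|\mathcal{O}\|^2/n$ completes the proof. The two places I expect to demand genuine care are the exact orbit-counting identity $\mathcal{O}_k=\Phi(T_k)$ and the rigorous vanishing of the cross-block covariances, phrased cleanly as a statement about a product observable measured on a product state.
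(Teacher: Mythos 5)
Your proposal is correct and follows essentially the same route as the paper: simultaneous measurement of the commuting Hermitian observables $\mathcal{O}_k=\mu(\Phi(s_k e_1))$, the identity $\mathcal{O}_k=\Phi(T_k)$, the Kadison--Schwarz inequality to get $\Var[\mathcal{O}_k]\leq\Var[T_k]$, and the disjoint-block independence plus the operator-norm bound to get $\Var[T_k]\leq\|\mathcal{O}\|^2/\floor{n/k}\leq 2k\|\mathcal{O}\|^2/n$. The two points you flag as delicate (the orbit identity $\Phi(T_k)=\mathcal{O}_k$ and the vanishing cross-covariances) are handled in the paper exactly as you describe, via \cref{lemma-4121756} and the factorization $\E[S_iS_j]=\E[S_i]\E[S_j]$ for operators supported on disjoint tensor factors.
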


The proof is based on the idea we outlined in \cref{sec-4271228} with full details provided here.
First, we formally define the weighted permutations and weighted cycle type in \cref{sec-4271205} and \cref{sec-4271212}. Then, we use these notations to construct our simultaneous estimators in \cref{sec-2111442} and bound the variance in \cref{sec-4271217}.
The proof of \cref{thm-2151640} is summarized in \cref{sec-4131102}.

\subsection{Weighted permutations}\label{sec-4271205}
First, we define the weighted permutation.
\begin{definition}[Weighted permutation] \label{def:weighted-perm}
A weighted permutation of degree $n$ is a tuple $(\pi,w)$ where $\pi\in\mathfrak{S}_n$ is a permutation and $w=[w_1,\ldots,w_n]\in\mathbb{Z}_{\geq 0}^n$ is a length-$n$ vector with non-negative integer entries representing the weights. 
We use $\mathfrak{W}_n$ to denote the set of all weighted permutations of degree $n$. 
\end{definition}
Then, we define the total weight of a weighted permutation $X\in\mathfrak{W}_n$.
\begin{definition}
For any $X=(\pi,w)\in\mathfrak{W}_n$, we define the total weight of $X$ as $|X|=|w|_1=\sum_i w_i$. 
\end{definition}
Then, we define the multiplication and involution on $\mathfrak{W}_n$.
\begin{definition}
We define two operations on $\mathfrak{W}_n$:
\begin{itemize}
    \item Multiplication \textup{``}$\cdot$\textup{''}:
    \[(\pi, w)\cdot (\pi', w') = (\pi \pi', w_{\pi'} + w'  ),\]
    where $w_{\pi'}=[w_{\pi'(1)},w_{\pi'(2)},\ldots,w_{\pi'(n)}]\in\mathbb{Z}_{\geq 0}^n$.
    \item Involution \textup{``}$\dag$\textup{''}:
    \[(\pi,w)^\dag= (\pi^{-1},w_{\pi^{-1}}),\]%\varphi_{\pi^{-1}}(w)),\]
    where $w_{\pi^{-1}}=[w_{\pi^{-1}(1)},w_{\pi^{-1}(2)},\ldots,w_{\pi^{-1}(n)}]\in\mathbb{Z}_{\geq 0}^n$.
\end{itemize}
\end{definition}
Some examples are shown in \cref{fig-3260329}.

\begin{remark}
For convenience, we may directly use $\pi w$ to denote the weighted permutation $(\pi,w)$. 
\end{remark}
It is easy to check the following properties.
\begin{fact}\label{fact-4121934}
For any $X,Y,Z\in\mathfrak{W}_n$,
\begin{itemize}
    \item $X\cdot (Y\cdot Z) = (X\cdot Y) \cdot Z$,
    \item $(I\mathbf{0})\cdot X=X\cdot (I\mathbf{0})= X$, where $I\in\mathfrak{S}_n$ is the identity permutation and $\mathbf{0}\in\mathbb{Z}_{\geq 0}^n$ is the zero vector,
    \item $(X\cdot Y)^\dag=Y^\dag \cdot X^\dag$.
\end{itemize}
\end{fact}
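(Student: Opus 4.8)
The plan is to verify all three identities by direct computation from the definitions of multiplication and involution, relying on a single bookkeeping lemma about the reindexing action $w \mapsto w_\sigma$ given by $(w_\sigma)_i = w_{\sigma(i)}$. First I would record three elementary properties of this action: (i) it is additive, $(u+v)_\sigma = u_\sigma + v_\sigma$, since both sides have $i$-th entry $u_{\sigma(i)} + v_{\sigma(i)}$; (ii) it composes as $(w_\sigma)_\tau = w_{\sigma\tau}$, since the $i$-th entry of the left side is $(w_\sigma)_{\tau(i)} = w_{\sigma(\tau(i))} = w_{(\sigma\tau)(i)}$; and (iii) $w_I = w$ together with $\mathbf{0}_\sigma = \mathbf{0}$. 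These three facts drive every remaining step.

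For associativity, write $X = (\pi, u)$, $Y = (\sigma, v)$, $Z = (\tau, w)$ and expand both groupings. One gets $(X \cdot Y) \cdot Z = (\pi\sigma\tau,\, (u_\sigma + v)_\tau + w)$, and applying additivity (i) and composition (ii) rewrites $(u_\sigma + v)_\tau = u_{\sigma\tau} + v_\tau$; on the other side, $X \cdot (Y \cdot Z) = (\pi\sigma\tau,\, u_{\sigma\tau} + v_\tau + w)$ directly, so the two agree in both the permutation and the weight component. The identity law is immediate: $(I\mathbf{0}) \cdot (\pi, w) = (\pi,\, \mathbf{0}_\pi + w) = (\pi, w)$ using $\mathbf{0}_\pi = \mathbf{0}$, and $(\pi, w) \cdot (I\mathbf{0}) = (\pi,\, w_I + \mathbf{0}) = (\pi, w)$ using $w_I = w$.

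For the anti-homomorphism property of the involution, with $X = (\pi, u)$ and $Y = (\sigma, v)$ I would compute $(X \cdot Y)^\dag = (\sigma^{-1}\pi^{-1},\, (u_\sigma + v)_{\sigma^{-1}\pi^{-1}})$ and $Y^\dag \cdot X^\dag = (\sigma^{-1}\pi^{-1},\, (v_{\sigma^{-1}})_{\pi^{-1}} + u_{\pi^{-1}})$, whose permutation components already coincide. For the weight component, additivity (i) splits $(u_\sigma + v)_{\sigma^{-1}\pi^{-1}} = (u_\sigma)_{\sigma^{-1}\pi^{-1}} + v_{\sigma^{-1}\pi^{-1}}$, and composition (ii) collapses $(u_\sigma)_{\sigma^{-1}\pi^{-1}} = u_{\sigma\sigma^{-1}\pi^{-1}} = u_{\pi^{-1}}$ and $(v_{\sigma^{-1}})_{\pi^{-1}} = v_{\sigma^{-1}\pi^{-1}}$, so the two sides match (addition of weights being commutative).

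The computations are entirely routine; the only point demanding care is the direction of composition in property (ii), since an incorrect ordering (e.g.\ landing on $w_{\tau\sigma}$ instead of $w_{\sigma\tau}$) would silently break both associativity and the involution identity. I would therefore fix the convention $(w_\sigma)_i = w_{\sigma(i)}$ once at the outset and apply it uniformly. As a consistency check, these are precisely the properties needed for $\mathfrak{W}_n \cong \mathfrak{S}_n \ltimes \mathbb{Z}_{\geq 0}^n$ to form a monoid with involution, and they could alternatively be read off from the matrix representation $\mu$; the direct tuple computation is cleaner, however, and avoids relying on any faithfulness of $\mu$ or structural assumption on $\mathcal{O}$.
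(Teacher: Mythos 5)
Your proof is correct: the three reindexing properties $(u+v)_\sigma = u_\sigma + v_\sigma$, $(w_\sigma)_\tau = w_{\sigma\tau}$, and $\mathbf{0}_\sigma = \mathbf{0}$, $w_I = w$ are exactly what is needed, and your expansions of both sides of each identity check out against the paper's definitions of ``$\cdot$'' and ``$\dag$''. The paper leaves this fact as an easy verification, and your direct tuple computation is precisely the intended argument, so there is nothing further to add.
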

In fact, $\mathfrak{W}_n$ can be viewed as a monoid $\mathfrak{S}_n\ltimes \mathbb{Z}_{\geq 0}^n$ with an additional involution operation, where $\mathbb{Z}_{\geq 0}^n$ is an abelian monoid with natural addition. On the other hand, we can also see the following properties.
\begin{fact}\label{fact-4130216}
For any $X,Y\in\mathfrak{W}_n$,
\begin{itemize}
    \item $|X\cdot Y|= |X|+|Y|$,
    \item $|X^\dag|=|X|$.
\end{itemize}
\end{fact}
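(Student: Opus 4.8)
The plan is to verify \cref{fact-4130216} by direct computation from the definitions of multiplication and involution on $\mathfrak{W}_n$, using the key observation that permuting the entries of a non-negative integer vector does not change its $\ell_1$-norm. The main tool throughout is the elementary fact that for any $w \in \mathbb{Z}_{\geq 0}^n$ and any permutation $\sigma \in \mathfrak{S}_n$, the reordered vector $w_\sigma = [w_{\sigma(1)}, \ldots, w_{\sigma(n)}]$ satisfies $\abs{w_\sigma}_1 = \sum_i w_{\sigma(i)} = \sum_j w_j = \abs{w}_1$, since $\sigma$ merely relabels the index set $\{1, \ldots, n\}$ over which we sum.

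For the first claim, I would write $X = (\pi, w)$ and $Y = (\pi', w')$, so that by the definition of multiplication $X \cdot Y = (\pi\pi', w_{\pi'} + w')$. Then I compute
\[
\abs{X \cdot Y} = \abs{w_{\pi'} + w'}_1 = \sum_i \rbra{w_{\pi'(i)} + w'_i} = \sum_i w_{\pi'(i)} + \sum_i w'_i = \abs{w}_1 + \abs{w'}_1 = \abs{X} + \abs{Y},
\]
where the third equality splits the sum (all entries being non-negative integers, there are no cancellations to worry about) and the fourth applies the reordering invariance to the first summand. For the second claim, from the definition of involution $X^\dag = (\pi^{-1}, w_{\pi^{-1}})$, so $\abs{X^\dag} = \abs{w_{\pi^{-1}}}_1 = \abs{w}_1 = \abs{X}$, again directly by reordering invariance applied with $\sigma = \pi^{-1}$.

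I do not anticipate any genuine obstacle here, as the statement is purely combinatorial and follows immediately once the reordering invariance is isolated as a lemma; the only thing to be careful about is that the weights lie in $\mathbb{Z}_{\geq 0}^n$ rather than $\mathbb{Z}^n$, which guarantees $\abs{w}_1 = \sum_i w_i$ without absolute values and hence makes the splitting of the sum for the first claim completely clean. One could alternatively invoke the matrix representation $\mu$ and relate $\abs{X}$ to the total power of $\mathcal{O}$ appearing in $\mu(X)$, using $\mu(X \cdot Y) = \mu(X)\mu(Y)$ from the monoid structure, but this is more roundabout and the direct vector computation is both shorter and self-contained, so I would present the direct argument.
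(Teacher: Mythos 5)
Your proof is correct and is exactly the routine verification the paper has in mind when it states this as a \lcnamecref{fact-4130216} without proof: direct computation from the definitions of multiplication and involution, using the invariance of $\abs{w}_1$ under permutation of the entries. Nothing further is needed.
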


Then, we consider the matrix representation for $\mathfrak{W}_n$.
\begin{definition}\label{def-4120344}
Given an Hermitian operator $\mathcal{O}\in\mathcal{L}(\mathcal{H})$, we define the matrix representation $\mu: \mathfrak{W}_n\rightarrow \mathcal{L}(\mathcal{H}^{\otimes n})$ based on $\mathcal{O}$ as 
$$\mu(\pi w)=U_\pi\cdot (\mathcal{O}^{w_1}\otimes \mathcal{O}^{w_2}\otimes \cdots \otimes \mathcal{O}^{w_n}),$$
where $U_\pi$ to denotes the permutation operator acting on the space \(\mathcal{H}^{\otimes n}\), i.e., $U_\pi \cdot \ket{\psi_1}\cdots \ket{\psi_{n}}=\ket{\psi_{\pi^{-1}(1)}}\cdots \ket{\psi_{\pi^{-1}(n)}}$.
\end{definition}
We can easily see that $\mu$ is a valid matrix representation for $\mathfrak{W}_n$ through the following properties.
\begin{fact}\label{fact-4121936}
For any $X,Y\in\mathfrak{W}_n$,
\begin{itemize}
\item $\mu(X\cdot Y)=\mu(X)\cdot \mu(Y)$,
\item $\mu(X)^\dag = \mu(X^\dag)$.
\end{itemize}
\end{fact}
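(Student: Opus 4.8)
The plan is to reduce both identities to a single commutation relation describing how a tensor-product operator passes through a permutation operator, and then to match the resulting index bookkeeping against the definitions of multiplication and involution on $\mathfrak{W}_n$. The key lemma I would establish first is that for any operators $A_1, \ldots, A_n \in \mathcal{L}(\mathcal{H})$ and any $\sigma \in \mathfrak{S}_n$,
\[
(A_1 \otimes \cdots \otimes A_n)\, U_\sigma = U_\sigma\, (A_{\sigma(1)} \otimes \cdots \otimes A_{\sigma(n)}).
\]
This follows from a one-line computation on product basis states: applying the left-hand side to $\ket{\alpha_1}\cdots\ket{\alpha_n}$ first uses $U_\sigma \ket{\psi_1}\cdots\ket{\psi_n} = \ket{\psi_{\sigma^{-1}(1)}}\cdots\ket{\psi_{\sigma^{-1}(n)}}$ and then the tensor factors, placing $A_i\ket{\alpha_{\sigma^{-1}(i)}}$ in slot $i$; matching this against $U_\sigma(B_1 \otimes \cdots \otimes B_n)\ket{\alpha_1}\cdots\ket{\alpha_n}$ slot-by-slot forces $B_{\sigma^{-1}(i)} = A_i$, i.e.\ $B_j = A_{\sigma(j)}$. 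Alongside this I would record the elementary facts $U_\pi U_{\pi'} = U_{\pi\pi'}$ and $U_\pi^\dag = U_{\pi^{-1}}$.

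For the multiplication identity, writing $X = \pi w$ and $Y = \pi' w'$, I would expand $\mu(X)\mu(Y) = U_\pi (\mathcal{O}^{w_1} \otimes \cdots \otimes \mathcal{O}^{w_n})\, U_{\pi'} (\mathcal{O}^{w'_1} \otimes \cdots \otimes \mathcal{O}^{w'_n})$ and apply the commutation lemma with $A_i = \mathcal{O}^{w_i}$ and $\sigma = \pi'$. This slides $U_{\pi'}$ to the left, producing $U_\pi U_{\pi'} = U_{\pi\pi'}$ in front and leaving two adjacent tensor factors whose $i$-th components are $\mathcal{O}^{w_{\pi'(i)}}$ and $\mathcal{O}^{w'_i}$; these multiply to $\mathcal{O}^{w_{\pi'(i)} + w'_i}$. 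The resulting weight vector is exactly $w_{\pi'} + w'$, matching the definition $X \cdot Y = (\pi\pi', w_{\pi'} + w')$, so $\mu(X)\mu(Y) = \mu(X \cdot Y)$.

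For the involution identity I would take the adjoint of $\mu(\pi w) = U_\pi (\mathcal{O}^{w_1} \otimes \cdots \otimes \mathcal{O}^{w_n})$. Since $\mathcal{O}$ is Hermitian, each factor satisfies $(\mathcal{O}^{w_i})^\dag = \mathcal{O}^{w_i}$, so $\mu(\pi w)^\dag = (\mathcal{O}^{w_1} \otimes \cdots \otimes \mathcal{O}^{w_n})\, U_{\pi^{-1}}$; applying the commutation lemma with $\sigma = \pi^{-1}$ moves $U_{\pi^{-1}}$ to the front and relabels the weights to $w_{\pi^{-1}}$, yielding $U_{\pi^{-1}}(\mathcal{O}^{w_{\pi^{-1}(1)}} \otimes \cdots \otimes \mathcal{O}^{w_{\pi^{-1}(n)}})$, which is precisely $\mu(\pi^{-1} w_{\pi^{-1}}) = \mu((\pi w)^\dag)$.

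The computations are routine once the lemma is in place; the only genuine care needed — and the step I would treat as the main source of error rather than difficulty — is the index convention. One must be consistent about the fact that $U_\pi$ sends the content of slot $j$ to slot $\pi(j)$, and one must verify that the subscript $A_{\sigma(i)}$ appearing in the commutation lemma matches the definitions $w_{\pi'} = [w_{\pi'(1)}, \ldots, w_{\pi'(n)}]$ and $w_{\pi^{-1}} = [w_{\pi^{-1}(1)}, \ldots, w_{\pi^{-1}(n)}]$ used for multiplication and involution. Getting the direction of this relabeling exactly right is what makes the output weight vectors coincide with the stated definitions.
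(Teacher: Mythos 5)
Your proof is correct: the commutation lemma $(A_1\otimes\cdots\otimes A_n)U_\sigma = U_\sigma(A_{\sigma(1)}\otimes\cdots\otimes A_{\sigma(n)})$ is verified with the right index bookkeeping, and both identities then follow exactly as you describe, matching the paper's definitions of multiplication and involution (and correctly using the Hermiticity of $\mathcal{O}$ assumed in \cref{def-4120344}). The paper states this as a fact without proof, and your argument is precisely the routine verification it leaves to the reader.
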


Then we extend our definition to the monoid ring $\mathbb{C}\mathfrak{W}_n$, which is the set of formal sums \(\sum_{X\in\mathfrak{W}_n} c_X X\), where $c_X\in\mathbb{C}$ and $c_X=0$ for all but only finitely many $X$. The involution is extended anti-linearly on $\mathbb{C}\mathfrak{W}_n$, i.e., $(\sum_X c_X X)^\dag=\sum_X c_X^* X^\dag$, where $c_X^*$ is the complex conjugate of $c_X$.
The matrix representation $\mu$ can be extended naturally on $\mathbb{C}\mathfrak{W}_n$, i.e., 
$\mu(\sum_X c_X X)=\sum_Xc_X\mu(X)$. 
Then, one can easily check that the properties in \cref{fact-4121934} and \cref{fact-4121936} also hold on $\mathbb{C}\mathfrak{W}_n$. For convenience, we will use the following notation.
\begin{definition}\label{def-4130359}
For every $X \in \mathfrak{W}_n$, define $c_X \colon \mathbb{C}\mathfrak{W}_n \to \mathbb{C}$ such that for any $\sum_{Y} b_{Y} Y\in\mathbb{C}\mathfrak{W}_n$,
\[c_X\left(\sum_{Y} b_{Y} Y\right)= b_X.\]  
%For any $\mathcal{X}=\sum_X c_X X\in\mathbb{C}\mathfrak{W}_n$, define
%\[c_X(\mathcal{X})\coloneqq c_X.\]
\end{definition}

\subsection{Weighted cycle type}\label{sec-4271212}
There is a natural inclusion map from $\mathfrak{S}_n$ to $\mathfrak{W}_n$ by identifying each $\pi\in\mathfrak{S}_n$ as $\pi \mathbf{0}\in\mathfrak{M}_n$ (where $\mathbf{0}\in\mathbb{Z}_{\geq 0}^n$ is the zero vector).

Then, consider the conjugation action of $\mathfrak{S}_n$. The orbit of a weighted permutation $X\in\mathfrak{W}_n$ w.r.t. the conjugation action of $\mathfrak{S}_n$ is the set $\{\pi X\pi^{-1} \,|\, \pi\in\mathfrak{S}_n\}$.
Such orbit can be represented by a generalized cycle type, referred to as a weighted cycle type, where each transition within the cycle is associated with a weight $w_i$.
\begin{definition}[Weighted cycle type] \label{def:weighted-cycle-type}
A weighted cycle type of degree $n$ is a disjoint union of directed cycle graphs with $n$ vertices, where each edge $e$ is assigned a weight $w_e\in\mathbb{Z}_{\geq 0}$. We use $\tau\vdash n$ to denote that $\tau$ is a weighted cycle type of degree $n$.

Two weighted cycle types $\tau,\nu\vdash n$ are considered the same if there is a bijection $f$ between the vertex sets of $\tau$ and $\nu$ such that there exists an edge with weight $w_{v_1v_2}\in\mathbb{Z}_{\geq 0}$ from $v_1$ to $v_2$ if and only if there exists an edge with weight $w_{v_1v_2}$ from $f(v_1)$ to $f(v_2)$.
\end{definition}
An example of weighted cycle type is shown in \cref{fig-4110128}.
\begin{figure}[ht]
    \centering
    \begin{subfigure}[b]{0.3\linewidth}
    \centering
    \includegraphics[width=1.0\linewidth]{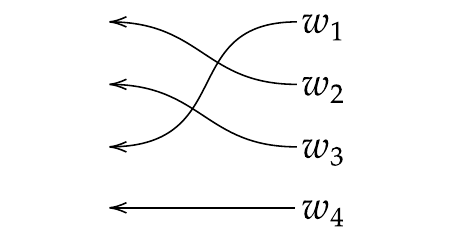}
    \caption{A weighted permutation.}\label{fig-4110125}
    \end{subfigure}
    \quad\quad
    \begin{subfigure}[b]{0.4\linewidth}
    \centering
    \includegraphics[width=0.8\linewidth]{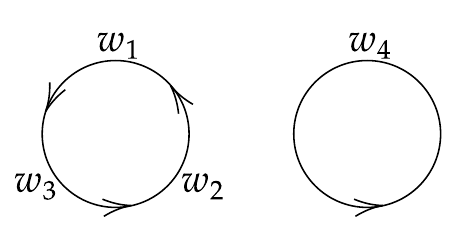}
    \caption{The weighted cycle type of \cref{fig-4110125}.}
    \end{subfigure}
    \caption{A weighted permutation with the corresponding weighted cycle type.}\label{fig-4110128}
\end{figure}

In general, the involution does not preserve the weighted cycle type of a weighted permutation $X\in\mathfrak{W}_n$ (an example is shown in \cref{fig-3260334}), which behaves differently from the ordinary cycle types of permutations. 
Nevertheless, it does preserve the weighted cycle type when the total weight $|X|\leq 2$ is small.

\begin{lemma}\label{lemma-4121756}
Given a weighted permutation $X\in\mathfrak{W}_n$, if the total weight $|X| \leq 2$, then the weighted cycle type of $X$ is the same as that of $X^\dag$.
\end{lemma}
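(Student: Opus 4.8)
The plan is to reduce the statement to a single cycle and then to a purely combinatorial fact about cyclic sequences. Since $\pi$ and $\pi^{-1}$ have exactly the same cycles with reversed orientation, the involution acts on the disjoint union of weighted cycles componentwise. Hence it suffices to understand, for one cycle of $\pi$, how the involution transforms the associated weight sequence, and then to show that the resulting weighted cycle is isomorphic to the original one; summing over cycles then yields equality of the two weighted cycle types as multisets.

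First I would fix a cycle $C=(a_1\,a_2\,\cdots\,a_m)$ of $\pi$, i.e.\ $\pi(a_i)=a_{i+1}$ with indices read modulo $m$, and record its weighted data as the cyclic sequence (``necklace'') $(w_{a_1},w_{a_2},\ldots,w_{a_m})$, where $w_{a_i}$ labels the edge $a_i\to a_{i+1}$. Using $(\pi,w)^\dag=(\pi^{-1},w_{\pi^{-1}})$ together with the identity $(w_{\pi^{-1}})_i=w_{\pi^{-1}(i)}$, a short index computation shows that the corresponding cycle of $X^\dag$ is the reversed cycle $(a_1\,a_m\,a_{m-1}\,\cdots\,a_2)$ carrying the reversed necklace $(w_{a_m},w_{a_{m-1}},\ldots,w_{a_1})$. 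Thus, up to isomorphism, the involution simply reverses the cyclic weight sequence of each cycle. This step is mostly bookkeeping, but care is needed to keep the index shifts and the source/target convention of the edge weights consistent; I expect this to be the most error-prone part.

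It then remains to prove the combinatorial claim: a cyclic sequence of non-negative integers with total sum at most $2$ equals its reversal up to a cyclic rotation. I would argue by cases on the total weight $|X|$. If $|X|\le 1$ there is at most one nonzero entry (necessarily equal to $1$), and a necklace with a single nonzero entry is trivially invariant under reversal. If $|X|=2$, then either a single entry equals $2$, which is again a single-nonzero-entry necklace and hence invariant, or exactly two entries equal $1$. In the latter case the configuration is two marked points on a cycle; the key observation is that reversal preserves the unordered pair of gap-lengths between the two marks, and a two-mark necklace is determined up to rotation precisely by this unordered pair, so it coincides with its reversal up to rotation. Distributing the weight budget across several cycles only produces cycles that each fall into one of these cases (or carry weight $0$), so every cycle of $X$ is isomorphic to its reversal.

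Combining the two steps: the involution sends the multiset of weighted cycles of $X$ to the multiset of their reversals, and under $|X|\le 2$ each weighted cycle equals its reversal; hence $X$ and $X^\dag$ have the same weighted cycle type. The role of the hypothesis is exactly to cap the number of marked edges on any single cycle at two, since already three nonzero weights can form a chiral necklace (e.g.\ gaps that are not palindromic), which is why the bound $|X|\le 2$ cannot be dropped.
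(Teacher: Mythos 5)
Your proposal is correct and follows essentially the same route as the paper's proof: reduce to a single cycle of $\pi$, observe that the involution just reverses the orientation of each weighted cycle (i.e.\ reverses the cyclic weight sequence), and then handle total weight $0$, $1$, and $2$ by the same gap-length argument for the two weight-$1$ edges. Your explicit index check that $(\pi,w)^\dag$ carries the reversed necklace, and your closing remark on why $|X|\le 2$ is tight, are slightly more detailed than the paper's presentation but do not change the argument.
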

\begin{proof}
Suppose $X=\pi w$. Then, we have $|w|_1\leq 2$.
The involution $\dag$ only reverses the edge directions in each directed cycle graph. It suffices to prove that a directed cycle graph is invariant under reversing when the sum of its weights is no more than $2$. 

If the sum of weight is $0$, then this is a cycle with all edges having weight $0$, and reversing its edge directions gives the same cycle.

If the sum of weight is $1$, then this is a cycle with only one edge having weight $1$ and others having weight $0$, and reversing its edge directions gives the same cycle.

If the sum of weight is $2$ with only one edge having weight $2$ and others having weight $0$, then this is the same as in the previous situation. If there are two edges having weight $1$, suppose that, starting from one of these two edges, we need to traverse $n_1$ weight-$0$ edges to get to the second weight-$1$ edge, and then traverse $n_2$ weight-$0$ edges to get back to the first weight-$1$ edge. Then, if the edge directions are reversed, we can start from the second weight-$1$ edge and traverse $n_1$ weight-$0$ edges to get to the first weight-$1$ edge, and then traverse $n_2$ weight-$0$ edges to get back to the second weight-$1$ edge. Therefore, the reversed cycle graph is the same as the original cycle graph.
\end{proof}

For convenience, we will use the following notation.
\begin{definition}\label{def-4130413}
    For every weighted cycle type $\tau \vdash n$, define $c_{\tau} \colon \mathbb{C}\mathfrak{W}_n \to \mathbb{C}$ such that for any $\mathcal{X}\in\mathbb{C}\mathfrak{W}_n$, 
    %For any $\mathcal{X}\in\mathbb{C}\mathfrak{W}_n$ and any weighted cycle type $\tau\vdash n$, define
    \[c_\tau(\mathcal{X})\coloneqq \sum_{X\textup{ has type }\tau} c_X(\mathcal{X}),\]
    where $c_X(\cdot)$ is defined in \cref{def-4130359}.
\end{definition}

\subsection{Simultaneous estimators for \texorpdfstring{$\tr(\mathcal{O}\rho^k)$}{tr(Oρ\^k)}}\label{sec-2111442}
Given an observable $\mathcal{O}\in\mathcal{L}(\mathcal{H})$, let $\mu$ be the matrix representation of $\mathfrak{W}_n$ based on $\mathcal{O}$ as defined in \cref{def-4120344}. We define the following symmetrization map.
\begin{definition}\label{def-4131141}
Let $\Phi$ be the symmetrization map acting on $\mathbb{C}\mathfrak{W}_n$ as: 
\[\Phi(X)= \frac{1}{n!}\sum_{\pi\in\mathfrak{S}_n} \pi X \pi^{-1},\]
where $X\in \mathbb{C}\mathfrak{W}_n$.
\end{definition}
Then, our estimators are defined as follows.
\begin{definition}\label{def-2110411}
For any \(1\leq k\leq  n\), we define
\begin{equation}\label{eq-2132257}
\mathcal{O}_k\coloneqq \mu(\Phi(s_k e_1)),
\end{equation}
where \(s_k\in\mathfrak{S}_n\) is the cyclic shift permutation on the first \(k\) elements, i.e., $s_k(i)=\rbra{i+1} \bmod k$ for $i\leq k$ and $s_k(i)=i$ for $k< i\leq n$, and $e_1=[1,0,\ldots,0]\in\mathbb{Z}_{\geq 0}^n$.
\end{definition}

We have the following results.
\begin{proposition}\label{prop-4121906}
\(\mathcal{O}_k\) is Hermitian and \(\tr(\mathcal{O}_k\rho^{\otimes n})=\tr(\mathcal{O}\rho^k)\).
\end{proposition}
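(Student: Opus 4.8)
The plan is to establish the two claims separately, relying on the structural properties of the symmetrization map $\Phi$ and the matrix representation $\mu$.

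For Hermiticity, I would first record two elementary facts about $\Phi$. Since every permutation satisfies $(\pi\mathbf{0})^\dag=\pi^{-1}\mathbf{0}$, a direct reindexing of the sum over $\mathfrak{S}_n$ gives $\Phi(X)^\dag=\Phi(X^\dag)$ for every $X\in\mathbb{C}\mathfrak{W}_n$; and the substitution $\pi\mapsto\pi\sigma$ shows $\Phi$ is conjugation-invariant, $\Phi(\sigma X\sigma^{-1})=\Phi(X)$ for any $\sigma\in\mathfrak{S}_n$. Consequently $\Phi(X)$ depends only on the $\mathfrak{S}_n$-conjugation orbit of $X$, i.e.\ only on its weighted cycle type. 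Using $\mu(X^\dag)=\mu(X)^\dag$ from \cref{fact-4121936}, we obtain $\mathcal{O}_k^\dag=\mu(\Phi(s_k e_1))^\dag=\mu(\Phi((s_k e_1)^\dag))$. Because $|s_k e_1|=|e_1|_1=1\leq 2$, \cref{lemma-4121756} shows that $s_k e_1$ and $(s_k e_1)^\dag$ share the same weighted cycle type, hence lie in the same orbit, so $\Phi((s_k e_1)^\dag)=\Phi(s_k e_1)$. Therefore $\mathcal{O}_k^\dag=\mathcal{O}_k$.

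For the expectation, the key input is the permutation-invariance of $\rho^{\otimes n}$: $U_\pi\rho^{\otimes n}U_\pi^{-1}=\rho^{\otimes n}$ for all $\pi\in\mathfrak{S}_n$. Writing $\mu(\pi X\pi^{-1})=U_\pi\mu(X)U_\pi^{-1}$ (a consequence of $\mu$ being a representation, \cref{fact-4121936}) and combining cyclicity of the trace with this invariance, every term in $\tr(\mu(\Phi(s_k e_1))\rho^{\otimes n})=\frac{1}{n!}\sum_{\pi\in\mathfrak{S}_n}\tr(\mu(\pi(s_k e_1)\pi^{-1})\rho^{\otimes n})$ equals $\tr(\mu(s_k e_1)\rho^{\otimes n})$, so the symmetrization can be dropped. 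It then remains to evaluate $\tr(\mu(s_k e_1)\rho^{\otimes n})=\tr(U_{s_k}(\mathcal{O}\otimes I^{\otimes n-1})\rho^{\otimes n})$. Distributing $(\mathcal{O}\otimes I^{\otimes n-1})\rho^{\otimes n}=(\mathcal{O}\rho)\otimes\rho^{\otimes n-1}$ and applying the standard generalized SWAP-test (cyclic-shift) trace identity to the single nontrivial $k$-cycle of $s_k$, with the $n-k$ fixed registers each contributing a factor $\tr(\rho)=1$, yields $\tr(\mathcal{O}\rho\cdot\rho^{k-1})=\tr(\mathcal{O}\rho^k)$.

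The one nonroutine point, and the step I expect to be the main obstacle, is the passage from ``same weighted cycle type'' to ``same $\mathfrak{S}_n$-orbit,'' since the argument crucially needs the \emph{equality} $\Phi((s_k e_1)^\dag)=\Phi(s_k e_1)$ rather than mere similarity of the two elements. I would therefore make explicit that the weighted cycle type is a complete invariant of the conjugation orbit (two weighted permutations are conjugate if and only if their weighted cycle types coincide), which is precisely where \cref{lemma-4121756} does the real work; everything else is bookkeeping with the monoid-ring structure and the cyclic trace identity.
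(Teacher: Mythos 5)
Your proof is correct and follows essentially the same route as the paper's: Hermiticity via $\mathcal{O}_k^\dag=\mu(\Phi((s_k e_1)^\dag))$ combined with \cref{lemma-4121756} and the fact that $\Phi$ depends only on the conjugation orbit, and the expectation via cyclicity of the trace, permutation-invariance of $\rho^{\otimes n}$, and the cyclic-shift trace identity. Your added remark that the weighted cycle type is a complete invariant of the conjugation orbit is a point the paper treats as immediate, but it is a correct and harmless elaboration rather than a different argument.
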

\begin{proof}
Note that
\begin{align}
\mathcal{O}_k^\dag & = \mu\big(\Phi(s_k e_1)^\dag\big)= \mu\big(\Phi\big((s_k e_1)^\dag\big)\big).
\end{align}
Since $|e_1|_1=1\leq 2$, by \cref{lemma-4121756}, $(s_k e_1)^\dag$ and $s_k e_1$ have the same weighted cycle type (thus the same orbit). Note that $\Phi(s_k e_1)$ produces the average of elements in the orbit of $s_k e_1$ and thus equals $\Phi((s_k e_1)^\dag)$. Therefore, $\mathcal{O}_k^\dag=\mathcal{O}_k$.

On the other hand, we have
\begin{align}
\tr(\mathcal{O}_k\rho^{\otimes n})&=\frac{1}{n!}\sum_{\pi\in\mathfrak{S}_n}\tr\Big(U_\pi U_{s_k} (\mathcal{O}\otimes I^{\otimes n-1}) U_{\pi}^\dag \rho^{\otimes n}\Big) \nonumber \\
&=\frac{1}{n!} \sum_{\pi\in\mathfrak{S}_n} \tr\Big(U_{s_k} (\mathcal{O}\otimes I^{\otimes n-1}) \rho^{\otimes n} \Big) \nonumber\\
&=\tr(\mathcal{O}\rho^k)\nonumber
\end{align}
\end{proof}

\begin{proposition}\label{lemma-2110242}
For any \(1\leq i\leq j \leq n\),
\[\mathcal{O}_i \mathcal{O}_j=\mathcal{O}_j\mathcal{O}_i.\]
\end{proposition}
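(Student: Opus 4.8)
The plan is to lift the entire computation to the monoid ring $\mathbb{C}\mathfrak{W}_n$ and prove the stronger element-level identity $\Phi(s_i e_1)\cdot\Phi(s_j e_1)=\rbra{\Phi(s_i e_1)\cdot\Phi(s_j e_1)}^\dag$, from which operator commutativity follows at once. Writing $A_k\coloneqq\Phi(s_k e_1)$ so that $\mathcal{O}_k=\mu(A_k)$, the multiplicativity of $\mu$ in \cref{fact-4121936} gives $\mathcal{O}_i\mathcal{O}_j=\mu(A_iA_j)$, while the Hermiticity of each $\mathcal{O}_k$ from \cref{prop-4121906} gives $\mathcal{O}_j\mathcal{O}_i=\mathcal{O}_j^\dag\mathcal{O}_i^\dag=(\mathcal{O}_i\mathcal{O}_j)^\dag=\mu\rbra{(A_iA_j)^\dag}$. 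Hence it suffices to show $A_iA_j=(A_iA_j)^\dag$ in $\mathbb{C}\mathfrak{W}_n$. Since the involution is antilinear and every coefficient appearing in $A_iA_j$ is real (each $A_k$ is an average, with real weights $1/n!$, of conjugates of $s_ke_1$, and ring multiplication preserves reality of coefficients), comparing coefficients reduces the goal to proving $c_X(A_iA_j)=c_{X^\dag}(A_iA_j)$ for every $X\in\mathfrak{W}_n$, with $c_X$ as in \cref{def-4130359}.

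I would establish this through two structural facts about $A_iA_j$. First, $A_iA_j$ is supported only on weighted permutations of total weight exactly $2$: conjugation preserves total weight, so each $A_k$ is supported on weight-$1$ elements (as $|s_ke_1|=|e_1|_1=1$), and by the additivity $|X\cdot Y|=|X|+|Y|$ of \cref{fact-4130216} every product term has weight $1+1=2$. Second, $A_iA_j$ is invariant under the conjugation action of $\mathfrak{S}_n$: each $\Phi(s_ke_1)$ is permutation-invariant by construction (\cref{def-4131141}), and since conjugation by any $\pi$ is a ring automorphism it commutes with multiplication, so $\pi(A_iA_j)\pi^{-1}=(\pi A_i\pi^{-1})(\pi A_j\pi^{-1})=A_iA_j$. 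Expanding this invariance coefficient-wise yields $c_{\pi X\pi^{-1}}(A_iA_j)=c_X(A_iA_j)$ for all $\pi$, i.e., the coefficient of $X$ depends only on the weighted cycle type (the conjugacy orbit) of $X$.

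With these in hand the conclusion is short. Fix $X\in\mathfrak{W}_n$. If $|X|\neq 2$, then $|X^\dag|=|X|\neq 2$ as well (the involution preserves total weight, \cref{fact-4130216}), so both $c_X(A_iA_j)$ and $c_{X^\dag}(A_iA_j)$ vanish. If $|X|=2$, then \cref{lemma-4121756} applies and guarantees that $X$ and $X^\dag$ have the same weighted cycle type, hence lie in the same conjugacy orbit; by the permutation-invariance of $A_iA_j$ they therefore carry the same coefficient, $c_X(A_iA_j)=c_{X^\dag}(A_iA_j)$. In every case the coefficients agree, so $A_iA_j=(A_iA_j)^\dag$, and applying $\mu$ gives $\mathcal{O}_i\mathcal{O}_j=\mathcal{O}_j\mathcal{O}_i$.

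The main obstacle, and indeed the very reason the construction uses the single weight-$1$ seed $e_1$, is confined to the step invoking \cref{lemma-4121756}: cycle-type preservation under the involution only holds up to total weight $2$, which is exactly the weight produced by multiplying two of the $\mathcal{O}_k$. Had any $A_k$ carried weight larger than $1$, the products would reach total weight $>2$, the involution would in general permute weighted cycle types nontrivially (as \cref{fig-3260334} illustrates), and the coefficient-matching would break down. The remaining subtleties, namely tracking the antilinearity of $\dag$ together with the reality of the coefficients and translating permutation-invariance into ``constant on orbits,'' are routine bookkeeping.
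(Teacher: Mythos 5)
Your proposal is correct and follows essentially the same route as the paper's proof: reduce commutativity to the identity $\Phi(s_ie_1)\Phi(s_je_1)=\bigl(\Phi(s_ie_1)\Phi(s_je_1)\bigr)^\dag$ in $\mathbb{C}\mathfrak{W}_n$, observe that every term in the product has total weight exactly $2$ so that \cref{lemma-4121756} makes $X$ and $X^\dag$ conjugate, and use permutation-invariance of the product to conclude that conjugate elements carry equal (real) coefficients. The only difference is cosmetic: you compare $c_X$ with $c_{X^\dag}$ directly via constancy on orbits, whereas the paper routes the same comparison through the orbit-sums $c_\tau$ and divides by the orbit size.
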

\begin{proof}
By \cref{prop-4121906}, we know that $\mathcal{O}_i^\dag=\mathcal{O}_i$, which means $\mathcal{O}_j\mathcal{O}_i=(\mathcal{O}_i\mathcal{O}_j)^\dag$. Let $P_i$ denote $\Phi(s_i e_1)$. Thus, $\mathcal{O}_i\mathcal{O}_j=\mu(P_i)\mu(P_j)=\mu(P_i P_j)$. Then, it suffices to show that  
\[P_i P_j= (P_i P_j)^\dag.\]

Suppose 
\[P_iP_j=\sum c_X X,\]
where $X\in\mathfrak{W}_n$ and $c_X=0$ for all but only finitely many $X$. From the definition of $P_i$, it is obvious that $c_X$ are real numbers. Then, $(P_iP_j)^\dag=\sum c_X X^\dag$. Furthermore, by the definition of $P_i$ and \cref{fact-4130216}, we know that for all $X$ such that $c_X\neq 0$, we have $|X|=2$, and thus $X^\dag$ has the same weighted cycle type as $X$ by \cref{lemma-4121756}. Therefore, for any weighted cycle type $\tau\vdash n$, we have 
\begin{equation}\label{eq-4130409}
c_\tau(P_iP_j) = c_\tau((P_iP_j)^\dag),
\end{equation}
where $c_\tau(\cdot)$ is defined in \cref{def-4130413}.

On the other hand, since all $P_i$ are permutation-invariant by definition (i.e., $\pi P_i \pi^{-1}=P_i$ for any $\pi\in\mathfrak{S}_n$), $P_i P_j$ is also permutation-invariant. Note that the conjugation action of $\mathfrak{S}_n$ acts transitively on each orbit in $\mathfrak{W}_n$. That is, for any $X,Y\in\mathfrak{W}_n$, if $X,Y$ are in the same orbit (of the same weighted cycle type), there exists a $\pi\in\mathfrak{S}_n$ such that
\[\pi X\pi^{-1}= Y.\]
Further note that the conjugation map $\pi (\cdot) \pi^{-1}$ is injective. We can conclude that $X$ and $Y$ must have the same coefficient in $P_iP_j$, i.e., $c_{X}(P_iP_j)=c_{Y}(P_iP_j)$. Therefore, if $X$ is of type $\tau$, then $c_X(P_iP_j)=c_\tau(P_i P_j)/ |\tau|$, where $|\tau|$ is the number of weighted permutations in the orbit $\tau$.
Since $(P_i P_j)^\dag$ is also permutation-invariant, the same argument works for $(P_iP_j)^\dag$, thus we have $c_X((P_iP_j)^\dag)= c_{\tau}((P_iP_j)^\dag)/|\tau|$. 
Then, for any $X\in\mathfrak{W}_n$, letting $\tau$ be its weighted cycle type, we have
\[c_X((P_iP_j)^\dag)= c_{\tau}((P_iP_j)^\dag)/|\tau|=c_\tau(P_i P_j)/ |\tau|=c_X(P_iP_j),\]
where the second equality is by \cref{eq-4130409}.
Therefore, $P_iP_j=(P_iP_j)^\dag$.
\end{proof}

\subsection{Bounding the variance}\label{sec-4271217}
Now, we fix \(\rho\) and introduce the following notations. Suppose \(\mathcal{Q}\) is an observable acting on \(\mathcal{H}^{\otimes n}\), let \(x\) be the outcome of measuring \(\rho^{\otimes n}\) with the observable \(\mathcal{Q}\). Then we write \(\E[\mathcal{Q}]\coloneqq \E[x]\) and \(\Var[\mathcal{Q}]\coloneqq \Var[x]\). It is easy to see that
\[\E[\mathcal{Q}]=\tr(\mathcal{Q}\rho^{\otimes n}), \quad \Var[\mathcal{Q}]=\E[\mathcal{Q}^2]-\E[\mathcal{Q}]^2=\tr(\mathcal{Q}^2\rho^{\otimes n})-\tr(\mathcal{Q}\rho^{\otimes n})^2.\]

Let $\mathcal{O}_k$ be the observables defined in \cref{def-2110411}. We will bound \(\Var[\mathcal{O}_k]\) in this section. 
For a sequence \(J=(j_1,j_2,\ldots,j_l)\subseteq [n]\), let \(s_J\in\mathfrak{S}_n\) be the cyclic shift permutation acting on \(J\), i.e., \(s_J\) maps \(j_i\) to \(j_{i+1}\), maps \(j_{l}\) to \(j_1\) and leaves other elements unchanged. Let
\begin{equation*}
    T_k\coloneqq \frac{1}{2\floor{n/k}}\sum_{i=0}^{\floor{n/k}-1} \rbra[\Big]{\mu(s_{(ik+1,\ldots,ik+k)}e_{ik+1}) +\mu(s_{(ik+1,\ldots,ik+k)}e_{ik+1})^\dag},
\end{equation*}
where $e_{i}=[0,\ldots,1,\ldots,0]\in\mathbb{Z}_{\geq 0}^n$ is the vector with a $1$ in the $i$-th place and $0$'s elsewhere.
We have the following result
\begin{lemma}\label{lemma-4131316}
\[\Var[\mathcal{O}_k]\leq \Var[T_k].\]
\end{lemma}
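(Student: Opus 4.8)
The plan is to realize $\mathcal{O}_k$ as the image of $T_k$ under the symmetrization channel and then to apply the Kadison-Schwarz inequality. Write $\mathcal{E}(A) \coloneqq \frac{1}{n!}\sum_{\pi\in\mathfrak{S}_n} U_\pi A U_\pi^\dag$ for the twirl on $\mathcal{L}(\mathcal{H}^{\otimes n})$, and let $\hat T_k \in \mathbb{C}\mathfrak{W}_n$ be the formal sum with $T_k = \mu(\hat T_k)$. Since $\mu(\pi X \pi^{-1}) = U_\pi \mu(X) U_\pi^\dag$ by \cref{fact-4121936} (using $U_{\pi^{-1}} = U_\pi^\dag$), we get $\mathcal{E}(T_k) = \mu(\Phi(\hat T_k))$, so the whole step reduces to checking the identity $\Phi(\hat T_k) = \Phi(s_k e_1)$ in $\mathbb{C}\mathfrak{W}_n$.

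First I would prove this identity. The map $\Phi$ is invariant under conjugation and hence constant on each orbit: if $Y = \sigma X \sigma^{-1}$, then reindexing $\pi \mapsto \pi\sigma$ gives $\Phi(Y) = \Phi(X)$. Each summand $s_{(ik+1,\ldots,ik+k)} e_{ik+1}$ of $\hat T_k$ lies in the orbit of $s_k e_1$, as both have the same weighted cycle type (a single $k$-cycle carrying total weight $1$, together with $n-k$ fixed points). Its involution $\rbra{s_{(ik+1,\ldots,ik+k)} e_{ik+1}}^\dag$ lies in the orbit of $\rbra{s_k e_1}^\dag$, which by \cref{lemma-4121756} (applicable since $|e_1|_1 = 1 \leq 2$) is again the orbit of $s_k e_1$. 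Thus each of the $2\floor{n/k}$ summands of $\hat T_k$ is carried by $\Phi$ to $\Phi(s_k e_1)$, and averaging gives $\Phi(\hat T_k) = \Phi(s_k e_1)$. Applying $\mu$ yields $\mathcal{E}(T_k) = \mathcal{O}_k$.

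With this identity in hand, the variance bound follows from a convexity argument. The channel $\mathcal{E}$ is a convex combination of unitary conjugations, hence a unital positive (indeed completely positive) linear map, and $T_k$ is Hermitian by construction (each $A_i + A_i^\dag$ is self-adjoint). The Kadison-Schwarz inequality then gives the operator inequality $\mathcal{O}_k^2 = \mathcal{E}(T_k)^2 \leq \mathcal{E}(T_k^2)$. Pairing both sides with $\rho^{\otimes n}$ and using the permutation invariance $U_\pi^\dag \rho^{\otimes n} U_\pi = \rho^{\otimes n}$, which implies $\tr(\mathcal{E}(M)\rho^{\otimes n}) = \tr(M\rho^{\otimes n})$ for every $M$, we obtain $\tr(\mathcal{O}_k^2 \rho^{\otimes n}) \leq \tr(\mathcal{E}(T_k^2)\rho^{\otimes n}) = \tr(T_k^2\rho^{\otimes n})$. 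The same invariance gives $\E[\mathcal{O}_k] = \tr(\mathcal{O}_k\rho^{\otimes n}) = \tr(T_k\rho^{\otimes n}) = \E[T_k]$ (both equal $\tr(\mathcal{O}\rho^k)$, as in \cref{prop-4121906}), so the squared means agree and subtracting them yields $\Var[\mathcal{O}_k] \leq \Var[T_k]$.

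The main obstacle is the first step, namely recognizing that $T_k$ symmetrizes exactly to $\mathcal{O}_k$. This rests entirely on \cref{lemma-4121756}: without the weight-$\leq 2$ hypothesis the Hermitian-conjugate terms $\rbra{\cdots}^\dag$ appearing in $T_k$ could fall into a different orbit and fail to collapse onto $\Phi(s_k e_1)$, so the identity $\mathcal{E}(T_k) = \mathcal{O}_k$ would break down. Once this identity is established, the remainder is the standard data-processing/operator-convexity inequality for unital positive maps, with the permutation invariance of $\rho^{\otimes n}$ making all the twirl-averaged traces collapse cleanly.
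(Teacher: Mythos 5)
Your proposal is correct and follows essentially the same route as the paper: symmetrize $T_k$ to recover $\mathcal{O}_k$ (using \cref{lemma-4121756} to handle the daggered summands), apply the Kadison--Schwarz inequality for the unital positive twirl, and use the permutation invariance of $\rho^{\otimes n}$ to collapse the traces. The only cosmetic difference is that the paper collapses the daggered terms via the already-established Hermiticity $\mathcal{O}_k^\dag = \mathcal{O}_k$ from \cref{prop-4121906}, whereas you rederive this orbit-level fact directly; the two are equivalent.
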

%This can be shown using \cite[Lemma 3.14]{BOW19}. For completeness, we provide a proof here.
\begin{proof}
The proof follows the idea in \cite[Lemma 3.14]{BOW19}.

By symmetry, for any \(i\), we have
\[\Phi(s_{(ik+1,\ldots,ik+k)}e_{ik+1})=\Phi(s_{k}e_1),\]
where $\Phi$ is the symmetrization map defined in \cref{def-4131141}.
Therefore, 
\[\Phi(T_k)=\frac{1}{2} (\mathcal{O}_k+\mathcal{O}_k^\dag) = \mathcal{O}_k,\]
where with a slight abuse of notations, $\Phi$ is reinterpreted as the symmetrization map acting on $\mathcal{L}(\mathcal{H})$, i.e., $\Phi(A)=\frac{1}{n!}\sum_{\pi\in\mathfrak{S}_n}U_\pi A U_\pi^\dag$ for any $A\in\mathcal{L}(\mathcal{H}^{\otimes n})$.
Note that \(\Phi\) is both positive and unital, by the Kadison-Schwarz inequality~\cite{kadison1952generalized}, we have
\[\Phi(T_k)^2\sqsubseteq \Phi(T_k^2),\]
where \(\sqsubseteq\) is the Loewner order. This means
\begin{align}
\E[\mathcal{O}_k^2]&= \tr(\mathcal{O}_k^2\rho^{\otimes n})=\tr(\Phi(T_k)^2\rho^{\otimes n})\nonumber \\
&\leq  \tr(\Phi(T_k^2)\rho^{\otimes n})\nonumber\\
&=\tr(T_k^2\rho^{\otimes n}) \label{eq-270402}\\
&=\E[T_k^2],\nonumber
\end{align}
where \cref{eq-270402} is because \(\pi\in\mathfrak{S}_n\) commutes with \(\rho^{\otimes n}\). Since \(\E[\mathcal{O}_k]=\tr(\rho^{k})=\E[T_k]\), we have \(\Var[\mathcal{O}_k]\leq \Var[T_k]\).
\end{proof}

Then, it remains to upper bound \(\Var[T_k]\). We have the following result.
\begin{lemma}\label{lemma-4131315}
\[\Var[T_k]\leq \frac{2k\|\mathcal{O}\|^2}{n}.\]
\end{lemma}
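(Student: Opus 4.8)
The plan is to exploit the disjoint-block structure of $T_k$ to reduce the variance bound to a single-block computation. Write $m \coloneqq \floor{n/k}$, and for each $0 \le i \le m-1$ set $M_i \coloneqq \mu(s_{(ik+1,\ldots,ik+k)}e_{ik+1})$ and $B_i \coloneqq \frac12(M_i + M_i^\dag)$, so that $T_k = \frac1m \sum_{i=0}^{m-1} B_i$ and each $B_i$ is Hermitian. The crucial observation is that $B_i$ acts nontrivially only on the $i$-th block of registers $\{ik+1,\ldots,ik+k\}$, and these $m$ blocks are pairwise disjoint (the leftover registers $mk+1,\ldots,n$ are untouched by every $B_i$). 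Since $M_i = U_{s_{(ik+1,\ldots,ik+k)}} \cdot (\mathcal{O}\text{ placed on a single register})$ with $U$ unitary, we have $\|B_i\| \le \|M_i\| = \|\mathcal{O}\|$ for every $i$.

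Next I would compute $\Var[T_k]$ directly from the formula $\Var[\mathcal{Q}] = \tr(\mathcal{Q}^2\rho^{\otimes n}) - \tr(\mathcal{Q}\rho^{\otimes n})^2$ established just above the lemma. Expanding $T_k^2 = \frac1{m^2}\sum_{i,j} B_i B_j$, the off-diagonal terms factorize: because $B_i$ and $B_j$ are supported on disjoint blocks and $\rho^{\otimes n}$ is a product state, for $i \ne j$ one has $\tr(B_i B_j \rho^{\otimes n}) = \tr(B_i \rho^{\otimes n})\tr(B_j \rho^{\otimes n})$. Moreover every block is a permuted copy of the first and $\rho^{\otimes n}$ is permutation-invariant, so all $\tr(B_i \rho^{\otimes n})$ equal a common value $b$ and all $\tr(B_i^2\rho^{\otimes n})$ equal a common value $a = \tr(B_0^2 \rho^{\otimes k})$ (tracing out the untouched registers reduces the first block to $\mathcal{H}^{\otimes k}$). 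Substituting and cancelling the $b^2$ contributions leaves
\[
\Var[T_k] = \frac{a - b^2}{m} = \frac1m \rbra*{\tr(B_0^2\rho^{\otimes k}) - \tr(B_0\rho^{\otimes k})^2} \le \frac{a}{m}.
\]
Finally, since $B_0^2$ is positive semidefinite and $\tr(\rho^{\otimes k}) = 1$, we have $a = \tr(B_0^2\rho^{\otimes k}) \le \|B_0\|^2 \le \|\mathcal{O}\|^2$, whence $\Var[T_k] \le \|\mathcal{O}\|^2/m$.

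To conclude, I would convert the $1/m$ factor into the stated $2k/n$ bound using the elementary inequality $m = \floor{n/k} \ge n/(2k)$, which holds for all $1 \le k \le n$, giving $\Var[T_k] \le 2k\|\mathcal{O}\|^2/n$. The only genuinely substantive point is the factorization of the off-diagonal cross terms, which is precisely the source of the variance reduction by a factor of $m$: the $m$ disjoint blocks behave like $m$ independent, identically distributed repetitions of a single-block estimator of variance at most $\|\mathcal{O}\|^2$, and averaging them cuts the variance by $m \approx n/k$. Everything else is operator-norm bookkeeping.
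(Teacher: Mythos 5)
Your proposal is correct and follows essentially the same route as the paper: decompose $T_k$ into the $\floor{n/k}$ Hermitian single-block estimators, use disjointness of the blocks (together with the product structure of $\rho^{\otimes n}$) to factorize the cross terms so that the variances add, bound each block's variance by the squared operator norm $\|\mathcal{O}\|^2$ (the paper phrases this as $\Var[S_i]\le\|S_i\|^2\le 4\|\mathcal{O}\|^2$ for $S_i=2B_i$), and finish with $\floor{n/k}\ge n/(2k)$. The only cosmetic difference is a factor-of-two rescaling in the definition of the block operators.
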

\begin{proof}
Let $S_i$ denote $\mu(s_{(ik+1,\ldots,ik+k)}e_{ik+1})+\mu(s_{(ik+1,\ldots,ik+k)}e_{ik+1})^\dag$. 
Note that for \(i\neq j\), $S_i$ and $S_j$ act non-trivially on different subsystems, and thus
\begin{align}
&\E\sbra*{S_i S_j} = \E\sbra*{S_i}\E\sbra*{S_j}. \nonumber
\end{align} 
Therefore
\begin{align}
\Var[T_k]&=\Var\sbra*{\frac{1}{2\lfloor n/k\rfloor}\sum_{i=0}^{\lfloor n/k \rfloor-1} S_i }\nonumber \\
&=\frac{1}{4\lfloor n/k\rfloor^2}\sum_{i=0}^{\lfloor n/k\rfloor -1}\Var\sbra*{S_i}\nonumber \\
&\leq \frac{1}{4 \lfloor n/k\rfloor^2} \cdot \lfloor n/k \rfloor\cdot 4\|\mathcal{O}\|^2 \label{eq-270416}\\
&= \frac{\|\mathcal{O}\|^2}{\lfloor n/k\rfloor} \nonumber \\
&\leq \frac{2k\|\mathcal{O}\|^2}{n},\nonumber
\end{align}
where in \cref{eq-270416} we use the fact that the operator norm of \(S_i\) is upper bounded by \(2\|\mathcal{O}\|\).
\end{proof}

Then, we have the following result.
\begin{proposition}\label{prop-4131317}
\[\Var[\mathcal{O}_k]\leq \frac{2k\|\mathcal{O}\|^2}{n}.\]
\end{proposition}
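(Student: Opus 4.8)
The plan is simply to chain the two preceding lemmas, since together they already supply both inequalities that make up the claim. First I would invoke \cref{lemma-4131316}, which gives $\Var[\mathcal{O}_k] \leq \Var[T_k]$. This is the step carrying the real structural content: it rests on the identity $\mathcal{O}_k = \Phi(T_k)$, followed by the Kadison--Schwarz inequality applied to the positive unital map $\Phi$ to obtain $\Phi(T_k)^2 \sqsubseteq \Phi(T_k^2)$ in the Loewner order, and finally the observation that each $U_\pi$ with $\pi\in\mathfrak{S}_n$ commutes with $\rho^{\otimes n}$, so that symmetrization leaves the trace against $\rho^{\otimes n}$ unchanged. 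Then I would invoke \cref{lemma-4131315}, which bounds $\Var[T_k] \leq 2k\|\mathcal{O}\|^2/n$ by exploiting that the $\lfloor n/k\rfloor$ blocks defining $T_k$ act on disjoint subsystems; their contributions are therefore uncorrelated, the variance of the average scales like $1/\lfloor n/k\rfloor$ times the per-block variance, and each block $S_i$ has operator norm at most $2\|\mathcal{O}\|$, giving per-block variance at most $4\|\mathcal{O}\|^2$.

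Combining the two displays yields $\Var[\mathcal{O}_k] \leq \Var[T_k] \leq 2k\|\mathcal{O}\|^2/n$, which is exactly the proposition. The only arithmetic worth stating is that the bound $\|\mathcal{O}\|^2/\lfloor n/k\rfloor$ from \cref{lemma-4131315} is converted to $2k\|\mathcal{O}\|^2/n$ using $\lfloor n/k\rfloor \geq n/(2k)$, valid for $1\leq k\leq n$; this is where the constant $2$ (rather than $1$) enters.

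In this sense there is no genuine obstacle remaining at this stage: the proposition is a direct corollary of \cref{lemma-4131316,lemma-4131315}, and the substantive difficulties were already handled upstream. Those difficulties were (i) showing that symmetrization cannot increase the variance, which requires the operator convexity packaged in Kadison--Schwarz rather than a naive averaging argument, and (ii) arranging $T_k$ so that its constituent blocks genuinely decouple while each retains a controlled operator norm. I would therefore present the proof of \cref{prop-4131317} as a two-line combination of the two lemmas.
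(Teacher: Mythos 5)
Your proposal is correct and matches the paper exactly: the paper's proof of \cref{prop-4131317} is literally the one-line combination of \cref{lemma-4131316} and \cref{lemma-4131315}, and your recap of those two lemmas (Kadison--Schwarz for the symmetrization step, decoupled blocks plus the norm bound $\|S_i\|\leq 2\|\mathcal{O}\|$ for the variance of $T_k$, and $\lfloor n/k\rfloor\geq n/(2k)$ for the final constant) is faithful to how they are proved there. Nothing is missing.
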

\begin{proof}
This is by combining \cref{lemma-4131316} with \cref{lemma-4131315}.
\end{proof}

\subsection{Proof of Theorem~\ref{thm-2151640}}\label{sec-4131102}
\begin{proof}[Proof of \cref{thm-2151640}]
We define \(p_1,\ldots,p_n\) to be the outcomes of measuring \(\rho^{\otimes n}\) with the observables \(\mathcal{O}_1,\ldots,\mathcal{O}_n\). Note that this is well-defined as \(\mathcal{O}_i\) are valid observables and pairwise commuting due to \cref{prop-4121906} and \cref{lemma-2110242}. Thus, we can perform the measurements simultaneously. Then, by \cref{prop-4121906} and \cref{prop-4131317}, we have \(\E[p_k]=\E[\mathcal{O}_k]=\tr(\mathcal{O}\rho^k)\) and \(\Var[p_k]=\Var[\mathcal{O}_k]\leq 2k\|\mathcal{O}\|^2/n\).
\end{proof}

\section{Sample Complexity Upper Bounds} \label{sec:upper}

\subsection{Nonlinear functionals}

\begin{theorem} \label{thm:upper}
    For any known observable $\mathcal{O}$, we can simultaneously estimate $\tr\rbra{\mathcal{O}\rho}, \tr\rbra{\mathcal{O}\rho^2},\dots, \tr\rbra{\mathcal{O}\rho^k}$ to within additive error $\varepsilon$ using $O\rbra{k\log\rbra{k} \Abs{\mathcal{O}}^2/\varepsilon^2}$ samples of $\rho$, where $\Abs{\cdot}$ is the operator norm. 
\end{theorem}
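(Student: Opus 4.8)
The plan is to lift the per-term estimators supplied by \cref{thm-2151640} into a high-probability \emph{simultaneous} estimation procedure via Chebyshev's inequality, the median trick, and a union bound. The decisive structural fact is that the observables $\mathcal{O}_1,\dots,\mathcal{O}_n$ are Hermitian (\cref{prop-4121906}) and pairwise commuting (\cref{lemma-2110242}), so one joint measurement on a single batch of samples yields estimates of \emph{all} $k$ values at once. This is exactly what lets us pay only a logarithmic, rather than a linear, overhead in $k$ when amplifying the confidence.

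First I would fix a batch size $n_0=\lceil 6k\Abs{\mathcal{O}}^2/\varepsilon^2\rceil$ and apply \cref{thm-2151640} to it. This produces outcomes $p_1,\dots,p_{n_0}$ from a single simultaneous measurement of $\rho^{\otimes n_0}$, with $\E[p_j]=\tr\rbra*{\mathcal{O}\rho^j}$ and $\Var[p_j]\le 2j\Abs{\mathcal{O}}^2/n_0\le 2k\Abs{\mathcal{O}}^2/n_0$ for every $j\le k$ (the bound is monotone in $j$, so the worst case $j=k$ is harmless to use uniformly). By Chebyshev's inequality, $\Pr\sbra*{\abs{p_j-\tr\rbra*{\mathcal{O}\rho^j}}\ge \varepsilon}\le \Var[p_j]/\varepsilon^2\le 2k\Abs{\mathcal{O}}^2/(n_0\varepsilon^2)\le 1/3$ by our choice of $n_0$. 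Hence a single batch of $O\rbra*{k\Abs{\mathcal{O}}^2/\varepsilon^2}$ samples simultaneously estimates each of the $k$ values to within additive error $\varepsilon$, each with probability at least $2/3$.

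Next I would amplify via the median trick. Draw $L=\lceil c\log k\rceil$ independent batches, each of size $n_0$, and for each $j$ take the median $\hat{p}_j$ of the estimates $p_j^{(1)},\dots,p_j^{(L)}$. For each fixed $j$, these $L$ estimates are i.i.d.\ across batches (the batches consume fresh samples), and each lands within $\varepsilon$ of $\tr\rbra*{\mathcal{O}\rho^j}$ with probability at least $2/3$; a standard Chernoff bound on the number of failing trials then gives $\Pr\sbra*{\abs{\hat{p}_j-\tr\rbra*{\mathcal{O}\rho^j}}\ge\varepsilon}\le 1/(3k)$ for a suitable constant $c$. A union bound over $j=1,\dots,k$ shows that all $k$ medians are within $\varepsilon$ simultaneously with probability at least $1-k\cdot \tfrac{1}{3k}=2/3$. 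The total sample cost is $L\cdot n_0=O\rbra*{\log k}\cdot O\rbra*{k\Abs{\mathcal{O}}^2/\varepsilon^2}=O\rbra*{k\log\rbra*{k}\Abs{\mathcal{O}}^2/\varepsilon^2}$, which is the claimed bound.

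I do not expect a substantive obstacle, as the hard content (unbiasedness, the variance bound, and the commutativity that validates the joint measurement) is already delivered by \cref{thm-2151640}. The only point demanding care is that the $k$ outcomes from one batch are \emph{correlated}, since they arise from a single joint measurement: independence must therefore be invoked only across batches, where it genuinely holds for each fixed $j$ and is all the median trick requires, while the final union bound over $j$ needs no independence whatsoever. Making these two scopes explicit is the main thing I would be careful to get right in the write-up.
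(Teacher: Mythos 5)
Your proposal is correct and follows essentially the same route as the paper's proof: the same batch size $n_0=\lceil 6k\Abs{\mathcal{O}}^2/\varepsilon^2\rceil$ with Chebyshev's inequality, the median trick over $O(\log k)$ independent batches, and a final union bound over the $k$ quantities. Your explicit remark that independence is needed only across batches (not among the correlated outcomes of one joint measurement) is a point the paper leaves implicit, but the argument is the same.
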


\begin{proof}
    From \cref{thm-2151640}, given \(n\) samples of an unknown state \(\rho\), we can outputs a list $p_1, p_2, \ldots, p_n $ such that for any \(1\leq k\leq n\),
    \[\E[p_k]=\tr(\mathcal{O} \rho^k) \quad\quad\textup{and}\quad\quad \Var[p_k]\leq \frac{2k\|\mathcal{O}\|^2}{n}.\] Using Chebyshev's inequality, we have
    \[
    \Pr\sbra{\abs{p_k - \E\sbra{p_k}} \geq \varepsilon} \leq \frac{2k\|\mathcal{O}\|^2}{n \varepsilon^2}.
    \]
    We take $n = \ceil{6k\|\mathcal{O}\|^2/\varepsilon^2}$ to ensure that the probability \[\Pr\sbra{\abs{p_j - \E\sbra{p_j}} \geq \varepsilon} \leq \frac{2 j \|\mathcal{O}\|^2}{n \varepsilon^2} \leq \frac{1}{3}\] for all $1 \leq j \leq k$. Then we can use the median trick to further boost the confidence of estimators $p_j$ to at least $1 - {1}/\rbra{3k}$ for all $1 \leq j \leq k$ with an $O\rbra{\log\rbra{k}}$ overhead.
    
    Specifically, we repeat the above estimation process for $m$ times and obtain the random variables $p_j^{(1)}, p_j^{(2)}, ..., p_j^{(m)}$ for all $1 \leq j \leq k$. Then we pick the median  $\tilde{p}_j = \operatorname{median}\rbra{p_j^{(1)}, p_j^{(2)}, ..., p_j^{(m)}}$ as the estimator of $\tr\rbra{\mathcal{O}\rho^j}$. Note that each $p_j^{(\ell)}$ for $1 \leq \ell \leq m$ is an independent random variable such that  
    \[\Pr\sbra*{\abs*{p_j^{\rbra{\ell}} - \tr\rbra{\mathcal{O}\rho^j}} \geq \varepsilon} \leq \frac{1}{3}.\]
    Define indicator variables $Y_{j, \ell} = \mathbbm{1}_{\cbra{\abs{p_j^{\rbra{\ell}} - \tr\rbra{\mathcal{O}\rho^j}} \geq \varepsilon}}$, so $Y_{j, \ell} \sim \operatorname{Bernoulli}\rbra{p}$ with $p \leq 1/3$. Let $Y_j = \sum_{\ell=1}^{m} Y_{j, \ell}$ be the number of ``bad'' estimates among the $m$ trials. To bound the probability that more than half of the $Y_{j, \ell}$'s are bad,
    \[ \Pr\sbra{\abs{\tilde{p}_j - \tr\rbra{\mathcal{O}\rho^j}}> \varepsilon}  \leq \Pr\sbra{Y_j > m/2}.\]
    Now apply Hoeffding's inequality~\cite[Theorem 2]{Hoe63} with $\E\sbra{Y_j} = mp \leq m/3$, we have
    \[ \Pr\sbra{Y_j > m/2} = \Pr\sbra*{Y_j - \E\sbra{Y_j} > m/6} \leq \exp\rbra{-m /18}. \]
    By choosing $m = \ceil{18 \ln\rbra{3k}}$, we have the success probability
    \[
    \Pr\sbra*{ \abs*{\tilde p_j - \tr\rbra{\mathcal{O}\rho^j}} \leq \varepsilon } \geq 1 - \frac{1}{3k}.
    \]
    %To ensure that the success probability is at least $1 - 1/3k$, it suffices to choose $m = \lceil 18 \ln\rbra{3k} \rceil$.
    
    Then by the union bound, we can simultaneously estimate $\tr\rbra{\mathcal{O}\rho}, \tr\rbra{\mathcal{O}\rho^2},\dots, \tr\rbra{\mathcal{O}\rho^k}$ to within additive error $\varepsilon$ with probability at least $2/3$. The entire estimation process uses $n\cdot m = \ceil{6k\|\mathcal{O}\|^2/\varepsilon^2} \cdot \ceil{18 \ln\rbra{3k}} =  O\rbra{k\log\rbra{k} \Abs{\mathcal{O}}^2/\varepsilon^2}$ samples of $\rho$.
\end{proof}

\subsection{Extension to general functionals by polynomial approximation} \label{sec:extension}
Our method enables us to simultaneously estimate several functionals of the form $\tr\rbra{\mathcal{O}g\rbra{\rho}}$, where $g$ is a real function.

\begin{corollary} \label{corollary:poly-func}
    Let $\mathcal{O}$ be an observable and $g_1, g_2, \dots, g_m \colon \sbra{0, 1} \to \mathbb{R}$ that can be approximated respectively by degree-$k$ polynomials $f_1, f_2, \dots, f_m$ to precision $\varepsilon/2\Abs{\mathcal{O}}d$.
    For any $d$-dimensional state $\rho$, we can simultaneously estimate $\tr\rbra{\mathcal{O}g_1\rbra{\rho}}$\textup{,} $\tr\rbra{\mathcal{O}g_2\rbra{\rho}}$\textup{,} \dots\textup{,} $\tr\rbra{\mathcal{O}g_m\rbra{\rho}}$ to within additive error $\varepsilon$ using 
    \[
    O\rbra*{\frac{k\Abs{\mathcal{O}}^2\log\rbra{\min\cbra{k, m}}}{\varepsilon^2}\max_{1 \leq i \leq m}\Abs{f_i}_1^2}
    \]
    samples of $\rho$, where $\Abs{f_i}_1$ is the $\ell_1$-norm of the polynomial coefficients of $f_i$. 
\end{corollary}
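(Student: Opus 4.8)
The plan is to reduce the estimation of each $\tr(\mathcal{O}g_i(\rho))$ to a real linear combination of the power-trace estimators $p_1,\dots,p_n$ furnished by \cref{thm-2151640}, and to control separately the polynomial-approximation error and the statistical error, allocating a budget of $\varepsilon/2$ to each.

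First I would dispose of the approximation error. Writing $\Delta_i = g_i(\rho) - f_i(\rho)$, note that $\rho$ has all eigenvalues in $[0,1]$, so functional calculus gives $\|\Delta_i\| \le \max_{x\in[0,1]}|g_i(x) - f_i(x)| \le \varepsilon/(2\|\mathcal{O}\|d)$. Since $\rho$ is $d$-dimensional, $\Delta_i$ has rank at most $d$ and hence $\|\Delta_i\|_1 \le d\|\Delta_i\|$; by H\"older's inequality $|\tr(\mathcal{O}g_i(\rho)) - \tr(\mathcal{O}f_i(\rho))| \le \|\mathcal{O}\|\,\|\Delta_i\|_1 \le \varepsilon/2$. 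This is exactly the origin of the $2\|\mathcal{O}\|d$ factor in the precision hypothesis, and it reduces the problem to estimating each $\tr(\mathcal{O}f_i(\rho))$ to within $\varepsilon/2$.

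Next, writing $f_i(x)=\sum_{j=0}^k a_{i,j}x^j$, we have $\tr(\mathcal{O}f_i(\rho)) = a_{i,0}\tr(\mathcal{O}) + \sum_{j=1}^k a_{i,j}\tr(\mathcal{O}\rho^j)$, where $\tr(\mathcal{O})$ is a known deterministic quantity. The natural estimator is $\hat P_i = a_{i,0}\tr(\mathcal{O}) + \sum_{j=1}^k a_{i,j}p_j$, which is unbiased by \cref{thm-2151640}. The one point requiring care is that $p_1,\dots,p_n$ come from a single simultaneous measurement and are therefore correlated, so I cannot simply add variances; instead I would bound the standard deviation by the triangle inequality for the $L^2$ norm, $\sqrt{\Var[\hat P_i]} \le \sum_{j=1}^k |a_{i,j}|\sqrt{\Var[p_j]} \le \|f_i\|_1\sqrt{2k\|\mathcal{O}\|^2/n}$, using $\Var[p_j] \le 2\|\mathcal{O}\|^2 j/n \le 2\|\mathcal{O}\|^2 k/n$. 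Hence $\Var[\hat P_i] \le 2k\|\mathcal{O}\|^2\|f_i\|_1^2/n$, and taking $n = O(k\|\mathcal{O}\|^2\max_i\|f_i\|_1^2/\varepsilon^2)$ makes each $\hat P_i$ accurate to $\varepsilon/2$ with constant probability via Chebyshev.

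Finally, to obtain the $\log(\min\{k,m\})$ factor rather than $\log m$, I would run whichever of two confidence-boosting strategies is cheaper. If $m \le k$, apply the median trick directly to the $m$ estimators $\hat P_i$, amplifying each to failure probability $1/(3m)$ at an $O(\log m)$ overhead, then union bound over the $m$ targets. If $k < m$, instead invoke \cref{thm:upper} to estimate $\tr(\mathcal{O}\rho^1),\dots,\tr(\mathcal{O}\rho^k)$ simultaneously to within $\varepsilon/(2\max_i\|f_i\|_1)$ at an $O(\log k)$ overhead, and form the combinations $\hat P_i$ afterwards; a single good event for the $k$ power traces forces all $m$ outputs to be accurate to $\varepsilon/2$ at once, by the same linear-combination bound. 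Either way the overhead is $O(\log\min\{k,m\})$, and combining with the $\varepsilon/2$ approximation error yields total error $\varepsilon$ and the stated sample count. The main obstacle is precisely the correlation among the $p_j$, which rules out naive variance addition and dictates the standard-deviation triangle inequality; the remaining work is the bookkeeping of the two error budgets and the two-regime amplification.
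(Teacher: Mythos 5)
Your proposal is correct and follows essentially the same route as the paper: bound the polynomial-approximation error by $\varepsilon/2$ via the precision hypothesis, form the linear combination of the simultaneous power-trace estimators, control its deviation by the triangle inequality for standard deviations (rather than naive variance addition, since the $p_j$ are correlated), and split into the $m\le k$ and $m>k$ regimes to get the $\log(\min\{k,m\})$ overhead. Your explicit H\"older/rank justification of the $2\Abs{\mathcal{O}}d$ factor and your handling of the constant term $a_{i,0}\tr(\mathcal{O})$ are slightly more careful than the paper's write-up, but the argument is the same.
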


\cref{corollary:poly-func} improves and generalizes the results in \cite{QKW24} (see also \cite{YLLW24}), where they presented an approach for the case where $\mathcal{O} = I$ and $m = 1$ with sample complexity $O\rbra{k^2\Abs{f_1}_1^2/\varepsilon^2}$. 
In particular, \cref{corollary:poly-func} improves it to $O\rbra{k\Abs{f_1}_1^2/\varepsilon^2}$ by a factor of $k$, and if there are $m$ different functionals to estimate, it incurs only an overhead of $\log\rbra{\min\cbra{k, m}}$. 
Moreover, \cref{corollary:poly-func} is optimal even when $m = 1$, $g_1\rbra{x} = f_1\rbra{x} \propto x^k$, and $\mathcal{O} \propto I$, where $\Omega\rbra{k\Abs{\mathcal{O}}^2 \Abs{f_1}_1^2 / \varepsilon^2}$ samples of $\rho$ are required as implied by \cref{thm:optimal}. 

\begin{proof}[Proof of \cref{corollary:poly-func}]
    Let us first focus on a single polynomial 
    $f=\sum_{j=1}^k \alpha_j x^j$.
    For a quantum state $\rho$,
    using \cref{thm:main},
    we can obtain $p_1,\ldots,p_k$ such that for $j=1$ to $k$,
    \begin{equation*}
        \E\sbra*{p_j}=\tr\rbra*{\mathcal{O}\rho^j}, \text{ and } \Var\sbra*{p_j}=\frac{2\Abs*{\mathcal{O}}^2 j}{n}.
    \end{equation*}
    Suppose that $X_1,\ldots, X_k$ are random variables. Observe that the standard variation of the random variable $X=X_1+\ldots+ X_k$ can be upper bounded by the sum of the standard variations of all $X_1,\ldots,X_k$ as follows:
    \begin{align*}
    \sigma[X_1+\ldots +X_k] &= \sqrt{\Var[X_1+\ldots+X_k]}\\
    &=\sqrt{\sum_j \Var[X_j]+ \sum_{j \neq l}\Cov[X_j,X_l]}\\
    &\leq \sqrt{\sum_j \Var[X_j]+ \sum_{j \neq l}\sqrt{\Var[X_j]\Var[X_l]}}\\
    &= \sqrt{\rbra*{\sum_j \sigma[X_j]}^2}\\
    &= \sum_j \sigma [X_j],
    \end{align*}
    where $\Cov[\cdot, \cdot]$ denotes the covariance between two random variables, and we use the Cauchy–Schwarz inequality.
    Let $\hat{p}=\sum_{j=1}^k a_j p_j$.
    Then, we can calculate $\E\sbra*{\hat{p}}=\tr\rbra*{\mathcal{O}f(\rho)}$ and
    \begin{equation*}
        \sigma\sbra*{\hat{p}}\leq \sum_{j=1}^k a_j \sigma\sbra*{p_j}=\sum_{j=1}^k \sqrt{\frac{2 \Abs*{\mathcal{O}}^2 j}{n}}\alpha_j,
    \end{equation*}
    by the above observation.
    Consequently, setting the number of samples
    \begin{equation*}
        n=O\rbra*{\frac{k \Abs*{f}_1^2\Abs*{\mathcal{O}}^2}{\varepsilon^2}}
    \end{equation*} 
    allows us to estimate $\tr\rbra*{\mathcal{O}f(\rho)}$
    to within additive error $\varepsilon/2$ and with probability $\geq 2/3$.
    Like in the proof of \cref{thm:upper}, by repeating this experiment for $O\rbra*{\log(1/\delta)}$ times, we can amplify the success probability to $1-\delta$.

    Now consider multiple polynomials $f_1,f_2,\ldots,f_m$. 
    \begin{itemize}
        \item 
        If $m\leq k$, then for each $i=1$ to $m$, we can estimate $\tr\rbra*{\mathcal{O}f_i(\rho)}$ to within additive error $\varepsilon/2$ and with probability $\geq 1-1/(3m)$.
        By the union bound, with overall success probability $\geq 2/3$, we can simultaneously estimate all $\tr\rbra*{\mathcal{O}f_1(\rho)}, \ldots, \tr\rbra*{\mathcal{O}f_m(\rho)}$ to within error $\varepsilon/2$. 
        \item 
        If $m> k$, then for each $j=1$ to $k$, we can estimate $\tr\rbra*{\mathcal{O}\rho^j}$  to within additive error $\varepsilon/(2\max_{i=1}^m\Abs*{f_i}_1)$ and with probability $\geq 1-1/(3k)$. 
        By the union bound, with success probability $\geq 2/3$, for any $i=1$ to $m$, 
        the linear combination of estimates of $\tr\rbra*{\mathcal{O}\rho^j}$ gives an estimation of $\tr\rbra*{\mathcal{O}f_i(\rho)}$ to within error $\varepsilon/2$.
    \end{itemize}
    Combining the results above, we can simultaneously estimate all $\tr\rbra*{\mathcal{O}f_1(\rho)}, \ldots, \tr\rbra*{\mathcal{O}f_m(\rho)}$ to within additive error $\varepsilon/2$, with overall success probability $\geq 2/3$, using $$O\rbra*{\frac{k\Abs{\mathcal{O}}^2\max_{i=1}^m\Abs{f_i}_1^2\log\rbra{\min\cbra{k, m}}}{\varepsilon^2}}$$ samples of $\rho$.
    Since in \Cref{corollary:poly-func},
    we assume that each $f_i(x)$ approximates $g_i(x)$ to precision $\varepsilon/(2\Abs*{\mathcal{O}}d)$ for $x\in [0,1]$, with probability $\geq2/3$, our estimates of $\tr\rbra*{\mathcal{O}f_i(\rho)}$ approximate $\tr\rbra*{\mathcal{O}g_i(\rho)}$ to within error $\varepsilon$.
    The conclusion immediately follows.
\end{proof}

\section{Sample Complexity Lower Bounds} \label{sec:lower}

In this section, we first establish a matching lower bound on the sample complexity of estimating $\tr\rbra{\rho^k}$.
Then, we extend the proof idea to show a lower bound on the sample complexity of estimating $\tr\rbra{\mathcal{O}\rho^k}$,
where $\mathcal{O}$ is a given observable. 
Further, we derive lower bounds on the sample complexity of estimating 
$\tr\rbra{f\rbra{\rho}}$ and $\tr\rbra{\mathcal{O}f\rbra{\rho}}$
for general functional $f$ approximated by polynomials.

\subsection{Estimation of trace powers \texorpdfstring{$\tr\rbra{\rho^k}$}{tr(ρ\^k)}}

Let us start with the lower bound on the sample complexity of estimating the trace power $\tr\rbra{\rho^k}$;
that is, the special case $\mathcal{O}=I$ in \cref{thm:optimal}.
Before proceeding, we introduce the following theorem and fact about quantum state discrimination. 

\begin{theorem} [Quantum state discrimination, cf.\ {\cite[Section 9.1.4]{Wil13}} and {\cite[Lemma 3.2]{Hay16}}]
    Let $\rho_0$ and $\rho_1$ be two quantum states, and let $\rho$ be a quantum state such that $\rho = \rho_0$ or $\rho = \rho_1$ with equal probability.
    Then, any POVM $\Lambda = \cbra{\Lambda_0, \Lambda_1}$ determines whether $\rho = \rho_0$ and $\rho = \rho_1$ with success probability at most 
    \begin{equation*}
        \frac{1}{2} + \frac{1}{4} \Abs{\rho_0 - \rho_1}_1,
    \end{equation*}
    where 
    \begin{equation*}
        \frac{1}{2} \Abs{\rho_0 - \rho_1}_1 = \frac{1}{2} \tr\rbra{\abs{\rho_0-\rho_1}}
    \end{equation*}
    is the trace distance.
\end{theorem}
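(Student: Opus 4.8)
The plan is to establish this Helstrom--Holevo bound directly from the operator structure of the trace norm. First I would write down the success probability of the natural decision rule attached to the POVM $\Lambda = \cbra{\Lambda_0, \Lambda_1}$, namely ``guess $\rho = \rho_0$ on outcome $0$ and $\rho = \rho_1$ on outcome $1$'', which succeeds with probability
\[
P_{\mathrm{succ}} = \tfrac{1}{2}\tr(\Lambda_0 \rho_0) + \tfrac{1}{2}\tr(\Lambda_1 \rho_1).
\]
Using the completeness relation $\Lambda_0 + \Lambda_1 = I$ to substitute $\Lambda_1 = I - \Lambda_0$, together with $\tr(\rho_1) = 1$, this collapses to
\[
P_{\mathrm{succ}} = \tfrac{1}{2} + \tfrac{1}{2}\tr\big(\Lambda_0(\rho_0 - \rho_1)\big),
\]
so the entire problem reduces to upper bounding the single quantity $\tr(\Lambda_0 \Delta)$, where I set $\Delta \coloneqq \rho_0 - \rho_1$.

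The heart of the argument is this bound, and it is where the trace distance enters. Since $\Delta$ is Hermitian, I would invoke its Jordan (spectral) decomposition $\Delta = \Delta_+ - \Delta_-$ into positive semidefinite parts $\Delta_+, \Delta_-$ with mutually orthogonal supports, for which $\abs{\Delta} = \Delta_+ + \Delta_-$ and hence $\Abs{\Delta}_1 = \tr(\Delta_+) + \tr(\Delta_-)$. Because $\rho_0$ and $\rho_1$ are both states, $\tr(\Delta) = 0$, which forces $\tr(\Delta_+) = \tr(\Delta_-) = \tfrac{1}{2}\Abs{\Delta}_1$. Now using the operator inequalities $0 \sqsubseteq \Lambda_0 \sqsubseteq I$ valid for any POVM element (with $\sqsubseteq$ the Loewner order), I would bound
\[
\tr(\Lambda_0 \Delta) = \tr(\Lambda_0 \Delta_+) - \tr(\Lambda_0 \Delta_-) \leq \tr(\Lambda_0 \Delta_+) \leq \tr(\Delta_+) = \tfrac{1}{2}\Abs{\rho_0 - \rho_1}_1,
\]
where the first inequality drops the nonnegative term $\tr(\Lambda_0 \Delta_-) \geq 0$ (a product of two positive semidefinite operators) and the second uses $\Lambda_0 \sqsubseteq I$ against $\Delta_+ \sqsupseteq 0$. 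Substituting back into the expression for $P_{\mathrm{succ}}$ yields $P_{\mathrm{succ}} \leq \tfrac{1}{2} + \tfrac{1}{4}\Abs{\rho_0 - \rho_1}_1$, which is exactly the claim.

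I expect no substantial obstacle here, as this is the textbook Helstrom bound; the only point genuinely demanding care is the reduction using the trace-zero condition, which is what converts the two separate estimates on $\tr(\Delta_\pm)$ into a single clean multiple of the trace distance. For completeness I would also remark that the bound is tight: choosing $\Lambda_0$ to be the projector onto the support of $\Delta_+$ (the Holevo--Helstrom measurement) turns each inequality above into an equality, so no POVM can achieve a strictly larger success probability. This optimal achievability is, however, not needed for the present application, where only the upper bound feeds into the lower-bound argument via the Helstrom--Holevo inequality.
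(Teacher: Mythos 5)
Your proof is correct and is precisely the standard Helstrom--Holevo argument that the paper itself does not reproduce but instead cites from \cite[Section 9.1.4]{Wil13} and \cite[Lemma 3.2]{Hay16}: reduce to bounding $\tr\rbra{\Lambda_0\rbra{\rho_0-\rho_1}}$ via completeness, then use the Jordan decomposition together with $0 \sqsubseteq \Lambda_0 \sqsubseteq I$ and the traceless condition. Every step checks out, including the tightness remark about the projector onto the support of $\Delta_+$.
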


\begin{fact} \label{fact:qsd}
    Any quantum algorithm that distinguishes between two quantum states $\rho_0$ and $\rho_1$ requires sample complexity $\Omega\rbra{1/\gamma}$, where $\gamma = 1 - \mathrm{F}\rbra{\rho_0, \rho_1}$ is the infidelity and 
    \begin{equation*}
        \mathrm{F}\rbra{\rho_0, \rho_1} = \tr\rbra*{\sqrt{\sqrt{\rho_0}\rho_1\sqrt{\rho_0}}}
    \end{equation*}
    is the fidelity. 
\end{fact}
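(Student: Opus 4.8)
The plan is to reduce any $n$-sample distinguishing algorithm to a single binary measurement on the joint state, and then combine the quantum state discrimination theorem stated above with the standard relations between trace distance and fidelity. Concretely, an algorithm that receives $n$ independent samples of $\rho \in \cbra{\rho_0, \rho_1}$ acts on the joint state $\rho^{\otimes n}$; regardless of whether it uses ancillas, adaptive operations, or a collective measurement across copies, its final binary decision is the outcome of some two-element POVM $\Lambda = \cbra{\Lambda_0, \Lambda_1}$ on $\mathcal{H}^{\otimes n}$ (extended by an ancilla, which can be absorbed). Hence, applying the quantum state discrimination theorem to $\rho_0^{\otimes n}$ and $\rho_1^{\otimes n}$, the success probability of any such algorithm is at most
\[
\frac{1}{2} + \frac{1}{4}\Abs{\rho_0^{\otimes n} - \rho_1^{\otimes n}}_1.
\]

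First I would bound the trace distance of the tensor powers by their fidelity. By the Fuchs--van de Graaf inequality, $\frac{1}{2}\Abs{\sigma_0 - \sigma_1}_1 \leq \sqrt{1 - \mathrm{F}\rbra{\sigma_0, \sigma_1}^2}$ for any two states, and by the multiplicativity of fidelity under tensor products, $\mathrm{F}\rbra{\rho_0^{\otimes n}, \rho_1^{\otimes n}} = \mathrm{F}\rbra{\rho_0, \rho_1}^n = \rbra{1-\gamma}^n$. Combining these two facts with the Helstrom--Holevo bound above, the success probability of distinguishing is at most
\[
\frac{1}{2} + \frac{1}{2}\sqrt{1 - \rbra{1-\gamma}^{2n}}.
\]

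Next I would argue that in order to distinguish $\rho_0$ from $\rho_1$ with a fixed success probability $c > 1/2$, the right-hand side above must be at least $c$, which forces $\rbra{1-\gamma}^{2n} \leq 1 - 4\rbra{c - 1/2}^2$. Taking logarithms gives $2n \cdot \rbra*{-\ln\rbra{1-\gamma}} \geq \ln\frac{1}{1 - 4\rbra{c-1/2}^2}$, a positive constant. Using $-\ln\rbra{1-\gamma} = O\rbra{\gamma}$ in the relevant regime where $\gamma$ is bounded away from $1$ (in the complementary regime $\gamma = \Omega\rbra{1}$, the claimed bound $\Omega\rbra{1/\gamma} = \Omega\rbra{1}$ is trivially satisfied), this yields $n = \Omega\rbra{1/\gamma}$, as desired.

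The calculation is short, so the care required lies in the two structural inputs rather than in any single estimate. The first is justifying that the most general $n$-sample strategy is captured exactly by the Helstrom bound applied to $\rho_0^{\otimes n}$ versus $\rho_1^{\otimes n}$, so that no adaptivity or entanglement across copies can help beyond the trace distance of the joint states. The second, which I expect to be the only genuinely fiddly point, is correctly handling the small-$\gamma$ asymptotics of $\rbra{1-\gamma}^{2n}$ so that the constant multiplying $1/\gamma$ is extracted cleanly; neither is a deep obstacle, and the crux is simply assembling Fuchs--van de Graaf, multiplicativity of fidelity, and the discrimination bound in the correct direction (upper-bounding the attainable success probability) to convert them into a sample-complexity lower bound.
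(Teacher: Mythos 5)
Your proof is correct, and it is precisely the standard derivation that the paper relies on implicitly: the paper states this as a Fact without proof (citing Helstrom--Holevo and the textbooks of Wilde and Hayashi), having just recorded the single-shot discrimination bound, and your chain of Helstrom on $\rho_0^{\otimes n}$ vs.\ $\rho_1^{\otimes n}$, Fuchs--van de Graaf in the direction $\tfrac12\Abs{\sigma_0-\sigma_1}_1 \le \sqrt{1-\mathrm{F}^2}$, and multiplicativity of fidelity under tensor products is exactly how those references establish it. All the delicate points you flag are handled correctly, including using $-\ln(1-\gamma)=O(\gamma)$ only for $\gamma$ bounded away from $1$ and dispatching the complementary regime trivially.
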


Now we can prove \cref{thm:optimal} for the special case of $\mathcal{O} = I$.  

\begin{theorem} \label{thm:O=I}
    For sufficiently large integer $k$ and sufficiently small $\varepsilon > 0$, estimating $\tr\rbra{\rho^k}$ to within additive error $\varepsilon$ requires sample complexity $\Omega\rbra{k/\varepsilon^2}$.
\end{theorem}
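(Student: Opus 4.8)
The plan is to exhibit the two-point family of diagonal qubit states
\[
\rho_\pm = \rbra*{1 - \frac{1}{k} \pm \frac{\varepsilon}{k}} \ketbra{0}{0} + \rbra*{\frac{1}{k} \mp \frac{\varepsilon}{k}} \ketbra{1}{1}
\]
already announced in the introduction, and to reduce $\varepsilon$-accurate estimation of $\tr\rbra{\rho^k}$ to distinguishing $\rho_+$ from $\rho_-$. Two quantitative estimates drive the argument: a lower bound $\tr\rbra{\rho_+^k} - \tr\rbra{\rho_-^k} = \Omega\rbra{\varepsilon}$ on the separation of the target quantities, and an upper bound $\gamma \coloneqq 1 - \mathrm{F}\rbra{\rho_+, \rho_-} = O\rbra{\varepsilon^2/k}$ on the infidelity. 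Granting both, any estimator achieving additive error a sufficiently small constant multiple of $\varepsilon$ outputs a value lying on opposite sides of the midpoint $\frac{1}{2}\rbra*{\tr\rbra{\rho_+^k} + \tr\rbra{\rho_-^k}}$ in the two cases, hence distinguishes them; by \cref{fact:qsd} this costs $\Omega\rbra{1/\gamma} = \Omega\rbra{k/\varepsilon^2}$ samples, which is the claimed bound after relabelling the error.

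For the separation, I would isolate the contribution of the large eigenvalue, writing $1 - 1/k \pm \varepsilon/k = 1 - (1 \mp \varepsilon)/k$ and setting $\phi(t) = \rbra{1 - (1-t)/k}^k$, so that the large-eigenvalue gap is $\phi(\varepsilon) - \phi(-\varepsilon) = \int_{-\varepsilon}^{\varepsilon} \phi'(t)\,dt$ with $\phi'(t) = \rbra{1 - (1-t)/k}^{k-1}$. Since $\rbra{1 - 1/k}^{k-1} > 1/e$ for every $k \geq 2$ (as $\rbra{1-1/k}^{k-1} = 1/\rbra{1+1/(k-1)}^{k-1}$ and $\rbra{1+1/n}^{n}$ increases to $e$) and $\phi'$ is continuous, $\phi'(t)$ stays bounded below by a positive constant on $[-\varepsilon, \varepsilon]$ for small $\varepsilon$, giving $\phi(\varepsilon) - \phi(-\varepsilon) = \Omega\rbra{\varepsilon}$. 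The small-eigenvalue contributions $\rbra{(1 \mp \varepsilon)/k}^k \leq \rbra{2/k}^k$ are super-exponentially small in $k$ and thus negligible against $\varepsilon$ once $k$ is large, so the full gap remains $\Omega\rbra{\varepsilon}$.

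For the infidelity, I would use that $\rho_+$ and $\rho_-$ commute, so the fidelity reduces to the classical Bhattacharyya sum $\mathrm{F}\rbra{\rho_+, \rho_-} = \sum_i \sqrt{\lambda_i \mu_i}$ over the matched eigenvalues, namely
\[
\mathrm{F}\rbra{\rho_+, \rho_-} = \sqrt{\rbra*{1 - \tfrac{1}{k}}^2 - \tfrac{\varepsilon^2}{k^2}} + \frac{1}{k}\sqrt{1 - \varepsilon^2}.
\]
A second-order Taylor expansion of each square root in $\varepsilon$ then yields $1 - \mathrm{F}\rbra{\rho_+, \rho_-} = \frac{\varepsilon^2}{2k} + O\rbra{\varepsilon^2/k^2} = O\rbra{\varepsilon^2/k}$, as required, the dominant term arising from the small eigenvalue $\frac{1}{k}\sqrt{1-\varepsilon^2}$.

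I expect the main obstacle to be the separation estimate: one must verify that the $\Omega\rbra{\varepsilon}$ lower bound on $\tr\rbra{\rho_+^k} - \tr\rbra{\rho_-^k}$ holds with a constant independent of $k$ — this is exactly what the uniform lower bound $\rbra{1-1/k}^{k-1} > 1/e$ on the derivative secures — together with confirming that the small-eigenvalue terms do not interfere. Once both quantitative bounds are in hand, the reduction to state discrimination via \cref{fact:qsd} is routine.
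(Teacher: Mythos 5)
Your proposal is correct and follows essentially the same route as the paper: the same two-point instance $\rho_\pm$, the same separation bound $\tr\rbra{\rho_+^k}-\tr\rbra{\rho_-^k}=\Omega\rbra{\varepsilon}$, the same infidelity bound $1-\mathrm{F}\rbra{\rho_+,\rho_-}=O\rbra{\varepsilon^2/k}$ (the paper likewise gets $\frac{\varepsilon^2}{2(k-1)}$), and the same appeal to \cref{fact:qsd}; your integral representation of $\phi(\varepsilon)-\phi(-\varepsilon)$ and the uniform bound $\rbra{1-1/k}^{k-1}>1/e$ are just a slightly more explicit version of the paper's derivative-at-zero computation. One small patch: since the regime is fixed large $k$ with $\varepsilon\to 0$, the individual bound $\rbra{2/k}^k$ on the small-eigenvalue terms is a constant and not automatically ``negligible against $\varepsilon$''; instead apply the same mean-value/integral argument to the small eigenvalue to see that the \emph{difference} $\rbra{\frac{1+\varepsilon}{k}}^k-\rbra{\frac{1-\varepsilon}{k}}^k$ is at most $2\varepsilon\rbra{2/k}^{k-1}$, which is a vanishing fraction of the main term for large $k$.
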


The lower bound $\Omega\rbra{k/\varepsilon^2}$ in \cref{thm:O=I} considers the dependence on $k$, whereas the previous lower bound $\Omega\rbra{1/\varepsilon^2}$ in \cite{CW25} assumes $k = \Theta\rbra{1}$. 

\begin{proof}[Proof of \cref{thm:O=I}]
Suppose that estimating $\tr\rbra{\rho^k}$ to additive error $\varepsilon$ can be done with sample complexity $S\rbra{k, \varepsilon}$. 
Let $\varepsilon \in \rbra{0, 1}$. 
Consider the problem of distinguishing the two quantum states $\rho_\pm$ defined by
\begin{equation}
    \label{eq:rho-pm}
    \rho_\pm = \rbra*{1-\frac{1}{k} \pm \frac{\varepsilon}{k}} \ketbra{0}{0} + \rbra*{\frac{1}{k}\mp\frac{\varepsilon}{k}} \ketbra{1}{1}. 
\end{equation}
Note that
\begin{equation*}
    \tr\rbra{\rho_\pm^k} = \rbra*{1-\frac{1}{k} \pm \frac{\varepsilon}{k}}^k + \rbra*{\frac{1}{k}\mp\frac{\varepsilon}{k}}^k.
\end{equation*}
Then, we have
\begin{equation} \label{eq:tr-diff}
    \tr\rbra{\rho_+^k} - \tr\rbra{\rho_-^k} = \Omega\rbra{\varepsilon}
\end{equation}
for sufficiently small $\varepsilon > 0$, which is by noting that
\begin{align*}
    \lim_{\varepsilon \to 0} \frac{\tr\rbra{\rho_+^k} - \tr\rbra{\rho_-^k}}{\varepsilon} 
    & = \left. \frac{\partial}{\partial \varepsilon} \rbra*{\tr\rbra{\rho_+^k} - \tr\rbra{\rho_-^k}} \right|_{\varepsilon = 0} \\
    & = \left. \rbra*{ \rbra*{1-\frac{1}{k}+\frac{\varepsilon}{k}}^{k-1} -  \rbra*{\frac{1}{k}-\frac{\varepsilon}{k}}^{k-1} +  \rbra*{1-\frac{1}{k}-\frac{\varepsilon}{k}}^{k-1} -  \rbra*{\frac{1}{k}+\frac{\varepsilon}{k}}^{k-1}} \right|_{\varepsilon = 0} \\
    & = 2 \rbra*{ \rbra*{1-\frac{1}{k}}^{k-1} - \rbra*{\frac{1}{k}}^{k-1} } \geq \Theta\rbra{1},
\end{align*}
where the last inequality is because
\begin{equation*}
    \lim_{k \to \infty} \rbra*{ \rbra*{1-\frac{1}{k}}^{k-1} - \rbra*{\frac{1}{k}}^{k-1} } = \frac{1}{e}. 
\end{equation*}

Then, we can distinguish $\rho_+$ and $\rho_-$ by separately estimating $\tr\rbra{\rho_+^k}$ and $\tr\rbra{\rho_-^k}$ to additive error $\Theta\rbra{\varepsilon}$, each with sample complexity $S\rbra{k, \Theta\rbra{\varepsilon}}$. 

On the other hand, by \cref{fact:qsd}, the quantum sample complexity of distinguishing $\rho_+$ and $\rho_-$ is 
\begin{equation*}
    \Omega\rbra*{\frac{1}{1-\mathrm{F}\rbra{\rho_+, \rho_-}}} \geq \Omega\rbra*{\frac{k}{\varepsilon^2}}
\end{equation*}
for sufficiently small $\varepsilon > 0$,
where the inequality is by noting that
\begin{align*}
    \lim_{\varepsilon \to 0} \frac{1-\mathrm{F}\rbra{\rho_+, \rho_-}}{\varepsilon^2} 
    & = \lim_{\varepsilon \to 0} \frac{1 - \sqrt{\rbra*{1-\frac{1}{k}}^2 - \frac{\varepsilon^2}{k^2}} - \sqrt{\rbra*{\frac{1}{k}}^2 - \frac{\varepsilon^2}{k^2}}}{\varepsilon^2} \\
    & = \lim_{x \to 0} \frac{1 - \sqrt{\rbra*{1-\frac{1}{k}}^2 - \frac{x}{k^2}} - \sqrt{\rbra*{\frac{1}{k}}^2 - \frac{x}{k^2}}}{x} \\
    & = \left. \frac{\partial}{\partial x} \rbra*{1 - \sqrt{\rbra*{1-\frac{1}{k}}^2 - \frac{x}{k^2}} - \sqrt{\rbra*{\frac{1}{k}}^2 - \frac{x}{k^2}}} \right|_{x = 0} \\
    & = \left. \rbra*{\frac{1}{2k^2\sqrt{\rbra*{1-\frac{1}{k}}^2 - \frac{x}{k^2}}} + \frac{1}{2k^2\sqrt{\rbra*{\frac{1}{k}}^2 - \frac{x}{k^2}}}} \right|_{x = 0} \\
    & = \frac{1}{2\rbra*{k-1}} = \Theta\rbra*{\frac{1}{k}}. 
\end{align*}
The above together gives
\begin{equation*}
    S\rbra{k, \Theta\rbra{\varepsilon}} \geq \Omega\rbra*{\frac{k}{\varepsilon^2}}
\end{equation*}
for sufficiently small $\varepsilon > 0$ (dependent on $k$). 
By letting $\delta = \Theta\rbra{\varepsilon}$, we have
\begin{equation*}
    S\rbra{k, \delta} \geq \Omega\rbra*{\frac{k}{\delta^2}}.
\end{equation*}
These yield the proof. 
\end{proof}

\subsection{Estimation of \texorpdfstring{$\tr\rbra{\mathcal{O}\rho^k}$}{tr(Oρ\^k)} with general observables \texorpdfstring{$\mathcal{O}$}{O}}

Next, we formally restate \cref{thm:optimal} in the following theorem and then show a lower bound on the sample complexity of estimating $\tr\rbra{\mathcal{O}\rho^k}$ for any given observable $\mathcal{O}$. 

\begin{theorem} \label{thm:lower-obs}
    For sufficiently large integer $k$, sufficiently small $\varepsilon > 0$, and any observable $\mathcal{O}$, estimating $\tr\rbra{\mathcal{O}\rho^k}$ to within additive error $\varepsilon$ requires sample complexity $\Omega\rbra{k\Abs{\mathcal{O}}^2/\varepsilon^2}$.
\end{theorem}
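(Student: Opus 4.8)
The plan is to reduce the general-observable case to the already-established special case $\mathcal{O} = I$ (Theorem~\ref{thm:O=I}) by a rescaling argument combined with a suitably chosen hard instance. First I would normalize: since the sample complexity should scale as $\Abs{\mathcal{O}}^2/\varepsilon^2$, I expect to rescale both $\mathcal{O}$ and $\varepsilon$ so that without loss of generality $\Abs{\mathcal{O}} = 1$. Concretely, writing $\mathcal{O}' = \mathcal{O}/\Abs{\mathcal{O}}$, an estimator for $\tr(\mathcal{O}\rho^k)$ to error $\varepsilon$ is exactly an estimator for $\tr(\mathcal{O}'\rho^k)$ to error $\varepsilon/\Abs{\mathcal{O}}$, so proving $\Omega(k/(\varepsilon/\Abs{\mathcal{O}})^2) = \Omega(k\Abs{\mathcal{O}}^2/\varepsilon^2)$ for the unit-norm case suffices.

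Next I would further reduce to an observable with a distinguished eigenvector. Since $\Abs{\mathcal{O}'} = 1$, there is a unit vector $\ket{\psi}$ with $\abs{\bra{\psi}\mathcal{O}'\ket{\psi}} = 1$, and by an appropriate basis change and sign flip I would arrange that $\bra{0}\mathcal{O}'\ket{0} = 1$, as the excerpt's overview suggests. The key point is that on states supported in the span of $\ket{0}$ and $\ket{1}$, I want the quantity $\tr(\mathcal{O}'\rho^k)$ to behave, to leading order in the relevant perturbation, like $\tr(\rho^k)$ did in the $\mathcal{O} = I$ analysis. The hard instance is the family $\rho_\pm$ from \cref{eq:rho-pm}, which lives in $\spanspace\{\ket 0,\ket 1\}$; since $\rho_\pm$ is diagonal in the computational basis and $\bra 0 \mathcal{O}'\ket 0 = 1$, the $\ket 0$-block contributes $\rbra{1 - 1/k \pm \varepsilon/k}^k$ to $\tr(\mathcal{O}'\rho_\pm^k)$, which is precisely the dominant term driving the $\Omega(\varepsilon)$ separation in Theorem~\ref{thm:O=I}.

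The remaining steps mirror the $\mathcal{O}=I$ proof: I would verify that $\tr(\mathcal{O}'\rho_+^k) - \tr(\mathcal{O}'\rho_-^k) = \Omega(\varepsilon)$ for sufficiently small $\varepsilon$, by differentiating at $\varepsilon = 0$ exactly as before and observing that the $\ket 0$-block dominates (the $\ket 1$-block contributes $O((1/k)^{k-1})$ and any off-diagonal matrix elements of $\mathcal{O}'$ do not contribute since $\rho_\pm$ is diagonal). Then the infidelity bound $1 - \mathrm{F}(\rho_+,\rho_-) = O(\varepsilon^2/k)$ is identical to the computation in Theorem~\ref{thm:O=I}, as it depends only on $\rho_\pm$ and not on $\mathcal{O}'$. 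Invoking \cref{fact:qsd}, distinguishing $\rho_+$ from $\rho_-$ requires $\Omega(k/\varepsilon^2)$ samples, and since an $\Theta(\varepsilon)$-accurate estimator of $\tr(\mathcal{O}'\rho^k)$ yields such a distinguisher, we obtain $\Omega(k/\varepsilon^2)$ for the unit-norm case, which rescales to the claimed bound.

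The main obstacle I anticipate is the reduction arguing that one may assume $\bra 0 \mathcal{O}'\ket 0 = 1$ while keeping the hard instance valid: one must check that the basis change (a unitary conjugation $\mathcal{O}' \mapsto V \mathcal{O}' V^\dagger$) can be absorbed into the states without changing the sample-complexity landscape, and that the resulting $\mathcal{O}'$ genuinely has a diagonal entry equal to $1$ in the same basis in which $\rho_\pm$ is diagonal. A subtle point is that $\Abs{\mathcal{O}'} = 1$ only guarantees an eigenvalue of magnitude $1$, not necessarily a real diagonal entry equal to $+1$ in the computational basis; handling the sign (by possibly swapping the roles of $\rho_+$ and $\rho_-$, or negating $\mathcal{O}'$, which is allowed since $-\mathcal{O}'$ is also an observable of norm $1$) requires a small case analysis. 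Once this alignment is fixed, the rest is a routine adaptation of the $\mathcal{O} = I$ calculation.
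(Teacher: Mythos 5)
Your proposal is correct and takes essentially the same approach as the paper: rescale to $\Abs{\mathcal{O}}=1$, align the basis so that $\mathcal{O}$ is diagonal with $\bra{0}\mathcal{O}\ket{0}=1$, reuse the hard instance $\rho_\pm$ from the $\mathcal{O}=I$ case, and note that the $\ket{1}$-block contribution cannot cancel the $\Omega(\varepsilon)$ gap while the infidelity bound is unchanged. The only cosmetic difference is that the paper controls the $\ket{1}$-block term via the bound $\bra{1}\mathcal{O}\ket{1}=a\le 1$ (reducing the difference exactly to the $\mathcal{O}=I$ quantity) rather than by its $O\rbra{(1/k)^{k-1}}$ magnitude, but both arguments are valid.
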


\begin{proof}
    Estimating $\tr\rbra{\mathcal{O}\rho^k}$ to within additive error $\varepsilon$ is equivalent to estimating $\tr\rbra{\frac{\mathcal{O}}{\Abs{\mathcal{O}}} \rho^k}$ to within error $\frac{\varepsilon}{\Abs{\mathcal{O}}}$. 
    So, it suffices to assume $\Abs{\mathcal{O}}=1$ and prove that estimating $\tr\rbra{\mathcal{O}\rho^k}$ to error $\varepsilon$ requires $\Omega\rbra{k/\varepsilon^2}$ samples of $\rho$.

    The proof idea is similar to that for \cref{thm:O=I}.
    Suppose that $\rho$ is a $d$-dimensional quantum state.
    Without loss of generality, we set the computational basis $\ket{x}_{x=0}^{d-1}$ such that $\mathcal{O}$ is diagonal in this basis and that $\bra{0}\mathcal{O}\ket{0}=1$.
    Suppose that $\bra{1}\mathcal{O}\ket{1}=a$ for some $a\in [-1,1]$.
    Like the proof of \cref{thm:O=I}, let us consider the problem of distinguishing the two quantum states $\rho_{\pm}$ defined in \cref{eq:rho-pm}.
    We can calculate
    \begin{equation*}
        \tr\rbra*{\mathcal{O}\rho_{\pm}^k}= \rbra*{1-\frac{1}{k}\pm \frac{\varepsilon}{k}}^k + a\rbra*{\frac{1}{k}\mp \frac{\varepsilon}{k}}^k. 
    \end{equation*}
    It can be verified that 
    \begin{align*}
        \abs*{\tr\rbra*{\mathcal{O}\rho_+^k}-\tr\rbra*{\mathcal{O}\rho_-^k}}
        &= \rbra*{1-\frac{1}{k} + \frac{\varepsilon}{k}}^k + a\rbra*{\frac{1}{k}-\frac{\varepsilon}{k}}^k
        - \rbra*{1-\frac{1}{k} - \frac{\varepsilon}{k}}^k - a\rbra*{\frac{1}{k}+\frac{\varepsilon}{k}}^k\\
        &\geq \rbra*{1-\frac{1}{k} + \frac{\varepsilon}{k}}^k + \rbra*{\frac{1}{k}-\frac{\varepsilon}{k}}^k
        - \rbra*{1-\frac{1}{k} - \frac{\varepsilon}{k}}^k - \rbra*{\frac{1}{k}+\frac{\varepsilon}{k}}^k\\
        &=  \tr\rbra{\rho_+^k} - \tr\rbra{\rho_-^k} \\
        & \geq \Omega\rbra{\varepsilon},
    \end{align*}
    where the last inequality is by \cref{eq:tr-diff}.
    Using the same reasoning as in the proof of \cref{thm:O=I},
    the conclusion immediately follows.
\end{proof}

\section{Applications} \label{sec:app}

\subsection{Entanglement spectroscopy}
Entanglement spectroscopy~\cite{JST17} is a powerful tool for analyzing the quantum correlations and topological properties of many-body systems. By studying the \emph{entanglement spectrum}, i.e., the eigenvalues of the reduced density matrix $\rho = \tr_B\rbra{\ketbra{\psi}{\psi}}$ obtained from a pure bipartite state $\ket{\psi}$ on systems $A$ and $B$, one could extract a wealth of information about the many-body systems. For example, the eigenvalues of the reduced state $\rho$ diagnose whether the bipartite state $\ket{\psi}$ is entangled or separable and characterize how strong the entanglement is~\cite{HE02}. The entanglement spectrum can be further used to identify topological orders~\cite{Fid10,LH08,PTB+10,
YQ10}, quantum phase transitions~\cite{DLL+12}, many-body localizations~\cite{SMA+16,YCH+15,YHG+17}, and irreversibility in quantum systems~\cite{CHM14}. 

Prior works have reduced the task of entanglement spectroscopy to computing high-order functionals of the reduced density matrix, i.e., $\tr\rbra{\rho^k}$ for $k = 2, 3, \ldots,k_{\mathrm{max}}$, which can be done by the Hadamard test~\cite{JST17}, Two-Copy test~\cite{SCC19}, qubit reset~\cite{YS21}, and cyclic shift~\cite{QKW24}. 
Note that $O\rbra{k_{\mathrm{max}}^2 \log\rbra{k_{\mathrm{max}}} /\varepsilon^2}$ samples of $\rho$ are needed in these algorithms to get an estimation of $\{\tr\rbra{\rho^k}\}_{k=2}^{k_{\mathrm{max}}}$ to within additive error $\varepsilon$. 
It can be seen that this can be further improved to $O\rbra{k_{\mathrm{max}} \log\rbra{k_{\mathrm{max}}} /\varepsilon^2}$, using our simultaneous estimator in \cref{thm:main} for the special case where $\mathcal{O} = I$ is the identity operator.

\subsection{Quantum virtual cooling} 
Quantum virtual cooling \cite{CCL+19} provides a way to estimate properties of thermal states at low temperatures, with a prominent application scenario in the doped Fermi-Hubbard model with ultracold atoms \cite{GB17}. 
For a thermal state $\rho\rbra{T} = e^{-\beta H}/\tr\rbra{e^{-\beta H}}$ at temperature $T$ with $H$ the Hamiltonian of the system, where $\beta = 1/\rbra{k_{\textup{B}}T}$ is the inverse temperature and $k_{\textup{B}}$ is the Boltzmann constant, the expectation of an observable $\mathcal{O}$ with respect to the thermal state $\rho\rbra{T/n}$ at the fractional temperature $T/n$ can be expressed as (noted in \cite{CCL+19})
\begin{equation} \label{eq:expect-thermal}
    \tr\rbra{\mathcal{O} \cdot \rho\rbra{T/n}} = \frac{\tr\rbra{\mathcal{O} \cdot \rho\rbra{T}^n}}{\tr\rbra{\rho\rbra{T}^n}}. 
\end{equation}
Thus, one can estimate $\tr\rbra{\mathcal{O} \cdot \rho\rbra{T/n}}$ from the nonlinear functionals $\tr\rbra{\mathcal{O} \cdot \rho\rbra{T}^n}$ and $\tr\rbra{\rho\rbra{T}^n}$ of $\rho\rbra{T}$, using $O\rbra{n}$ samples of $\rho\rbra{T}$. 

To draw a tomographic view of the thermal states of a system at different temperatures, one can consider fractional temperatures such as $T/2$, $T/3$, \dots, $T/n$. 
Specifically, the expectation $\tr\rbra{\mathcal{O}\cdot\rho\rbra{T/k}}$ at temperature $T/k$ can be obtained from $\tr\rbra{\mathcal{O} \cdot \rho\rbra{T}^k}$ and $\tr\rbra{\rho\rbra{T}^k}$ according to \cref{eq:expect-thermal}.
This requires to estimate the following values:
\[
\begin{matrix}
    \tr\rbra{\mathcal{O} \cdot \rho \rbra{T}^2}, &  \tr\rbra{\mathcal{O} \cdot \rho \rbra{T}^3}, & \dots, & \tr\rbra{\mathcal{O} \cdot \rho \rbra{T}^n}, \\
    \tr\rbra{\rho\rbra{T}^2}, & \tr\rbra{\rho\rbra{T}^3}, & \dots, & \tr\rbra{\rho\rbra{T}^n}.
\end{matrix}
\]
To achieve this, directly employing the approach in \cite{CCL+19} consumes $O\rbra{n^2\log\rbra{n}}$ samples of $\rho$. 
In sharp contrast, the simultaneous estimator in \cref{thm:main} allows us to use only $O\rbra{n\log\rbra{n}}$ samples of $\rho$, yielding a quadratic reduction in resources. 

\section{Discussion} \label{sec:discus}
We present an approach to simultaneously estimating high-order functionals of the form $\tr\rbra{\mathcal{O}\rho^k}$ with almost optimal sample complexity.
We apply this approach in entanglement spectroscopy and quantum virtual cooling, and extend it to general functionals by polynomial approximation. 

An interesting future direction is to extend the simultaneous estimation to more general cases. 

\begin{enumerate}
\item Multivariate case.
The multivariate trace $\tr\rbra{\rho_1\rho_2\dots\rho_m}$ can also be estimated by the cyclic shift \cite{EAO+02}, and was recently solved with constant quantum depth \cite{QKW24}. 
An interesting problem is how to simultaneously estimate $\tr\rbra{\rho\sigma}$, $\tr\rbra{\rbra{\rho\sigma}^2}$, \dots, $\tr\rbra{\rbra{\rho\sigma}^k}$, which is related to the estimation of fidelity \cite{QKW24} and multivariate fidelities \cite{NMLW25}. 
A solution to this problem may inspire new quantum algorithms for estimating quantities involving multiple quantum states, e.g., trace distance \cite{WZ24b} and relative entropy \cite{Hay24}. 
\item Non-integer case. 
Estimating $\tr\rbra{\rho^{\alpha}}$ for non-integer $\alpha$ is the key step in estimating the R\'enyi and Tsallis entropies \cite{AISW20,WGL+24,WZL24,WZ24,LW25,CW25}. 
An interesting problem is how to estimate $\tr\rbra{\rho^{\alpha_1}}, \tr\rbra{\rho^{\alpha_2}}, \dots, \tr\rbra{\rho^{\alpha_k}}$ for several non-integer $\alpha_1, \alpha_2, \dots, \alpha_k$. 
This could lead to fast simultaneous estimation of entropy. 
\item Incoherent case. 
As considered to be near-term friendly, incoherent measurements were investigated for estimating, e.g., $\tr\rbra{\rho\sigma}$ and $\tr\rbra{\rho^2}$ \cite{ACQ22,CCHL22,ALL22,GHYZ24}.
Recently, estimating $\tr\rbra{\rho^k}$ for large $k$ with incoherent measurements was studied in \cite{PTTW25}.
An interesting question is whether a simultaneous estimation exists in the incoherent case. 
\item Query-access case. 
In addition to sample access, another input model, known as the ``purified quantum query access'' model \cite{GL20}, gives quantum states via state-preparation circuits, as widely employed in the literature (e.g., \cite{GL20,SH21,GHS21,GMF24,GLM+22,WZC+23,WGL+24,GP22,RASW23,WZYW23,WZL24,Wan24,LW25}). 
However, it is not even clear if we can simultaneously estimate $\tr\rbra{\rho^2}, \dots, \tr\rbra{\rho^k}$ in this query model. 
Existing methods based on the generalized SWAP test \cite{EAO+02} and quantum amplitude estimation \cite{BHMT02} 
estimate $\tr\rbra{\rho^j}$ individually to within additive error $\varepsilon$ with query complexity $O\rbra{j/\varepsilon}$, leading to $O(k^2\log\rbra{k}/\varepsilon)$ for estimating all the $k$ values, which is still worse than the $O\rbra{k\log\rbra{k}/\varepsilon^2}$ given in this paper when $\varepsilon = \Theta\rbra{1}$. 
\end{enumerate}

\section*{Acknowledgment}
    The authors would like to thank Mark M.\ Wilde for helpful discussions.

    The work of Qisheng Wang was supported by the Engineering and Physical Sciences Research Council under Grant \mbox{EP/X026167/1}.
    The work of Zhicheng Zhang was supported in part by the Australian Research Council under Grant \mbox{DP250102952} and in part by the Sydney Quantum Academy, NSW, Australia.
    The work of Zhicheng Zhang was partly carried out during a visit to DIMACS, Rutgers University, USA.

\addcontentsline{toc}{section}{References}

\bibliographystyle{alphaurl}
\bibliography{main}

\end{document}